\newcommand{\distas}[1]{\mathbin{\overset{#1}{\kern\z@\sim}}}%
\newsavebox{\mybox}\newsavebox{\mysim}
\newtheorem{theorem}{Theorem}[section]
\newtheorem{proposition}[theorem]{Proposition}
\newtheorem{lemma}[theorem]{Lemma}
\theoremstyle{definition}
\newcommand{\distras}[1]{%
  \savebox{\mybox}{\hbox{\kern3pt$\scriptstyle#1$\kern3pt}}%
  \savebox{\mysim}{\hbox{$\sim$}}%
  \mathbin{\overset{#1}{\kern\z@\resizebox{\wd\mybox}{\ht\mysim}{$\sim$}}}%
}
\newcolumntype{C}[1]{>{\centering\let\newline\\\arraybackslash\hspace{0pt}}m{#1}}
\newcommand{\blind}{1}
\begin{document}

\def\spacingset#1{\renewcommand{\baselinestretch}%
{#1}\small\normalsize} \spacingset{1.45}


\if1\blind
{
	\centering{\bf\Large 
    Calibration for Computer Experiments with Binary Responses and Application to Cell Adhesion Study
    }\\
	  \vspace{0.2in}
} \fi

\if0\blind
{
  \bigskip
  \bigskip
  \bigskip
  \begin{center}
   {\LARGE\bf Calibration for Computer Experiments with Binary Responses and Application to Cell Adhesion Study}
\end{center}
  \medskip
} \fi

\if1\blind
{
	  \centering{Chih-Li Sung$^{a}$ $^{1}$, Ying Hung$^{b}$ \footnote{Joint first authors.},
	  William Rittase$^c$, Cheng Zhu$^c$,\\ C. F. J. Wu$^{d}$ \footnote{Corresponding author.} }\\
	      \vspace{0.2in}
	  \centering{$^a$Department of Statistics and Probability, Michigan State University \\
	  $^b$Department of Statistics, Rutgers, the State University of New Jersey\\
	  $^c$Department of Biomedical Engineering, Georgia Institute of Technology\\
	  $^d$School of Industrial and Systems Engineering, Georgia Institute of Technology\\}
} \fi

\bigskip
\begin{abstract}

Calibration refers to the estimation of unknown parameters which are present in computer experiments but not available in physical experiments. An accurate estimation of these parameters is important because it provides a scientific understanding of the underlying system which is not available in physical experiments.
Most of the work in the literature is limited to the analysis of continuous responses. Motivated by a study of cell adhesion experiments, we propose a new calibration framework for binary responses. Its application to the T cell adhesion data provides insight into the unknown values of the kinetic parameters which are difficult to determine by physical experiments due to the limitation of the existing experimental techniques.

\end{abstract}

\noindent%
{\it Keywords:} Cell biology, Computer experiment, Kriging, Single-molecule experiment, Uncertainty quantification
\vfill

\newpage
\spacingset{1.45} 

\section{Introduction}
 
To study a scientific problem by experimentation, there are generally two different approaches. 
One is to conduct physical experiments in a laboratory and the other is to perform computer simulations for the study of real systems using mathematical models and numerical tools, such as finite element analysis. Computer experiments have been widely adopted as alternatives to physical experiments, especially for studying complex systems where physical experiments are infeasible, inconvenient, risky, or too expensive. For example, \cite{mak2017efficient} study high-fidelity simulations for turbulent flows in a swirl injector, which are used in a wide variety of engineering applications. In computer experiments, there are two sets of input variables. One is the set of general inputs that represents controllable quantities which are also present in physical experiments, while the other is the set of unknown parameters which represents certain inherent attributes of the underlying systems but cannot be directly controlled or difficult to be measured in physical experiments. These unknown parameters are called \textit{calibration parameters} in the literature \citep{santner2003design}. The focus of this paper is \textit{calibration} which refers to the estimation of the calibration parameters 
using data collected from both physical and computer experiments, so that the computer outputs can closely match the physical responses.
\cite{kennedy2001bayesian} first developed a Bayesian method for calibration and has made a large impact in various fields where computer experiments are used \citep{bayarri2007framework,farah2014bayesian,gramacy2015calibrating, tuo2015efficient,tuo2016theoretical}.

An accurate estimation of calibration parameters is important because it can provide scientific insight that may not be directly obtainable in physical experiments. 
For example, calibration parameters in the implosion simulations are not measurable in physical experiments, the understanding of which provides important information regarding the yield stress of steel and the resulting detonation energy \citep{higdon2008computer}.
In the study of high-energy laser radiative shock system, one of the calibration parameters is the electron flux limiter, which is useful in predicting the amount of heat transferred between cells of a space-time mesh in the simulation but cannot be controlled in physical experiments \citep{gramacy2015calibrating}. 


This paper is motivated by a calibration problem in a study of molecular interactions. We study the molecular interaction by an important type of single molecular experiments called micropipette adhesion frequency assays \citep{chesla1998measuring}. It is the only published method for studying the kinetic rates of cell adhesion, which plays an important role in many physiological and pathological processes. Typically, there are two ways to perform micropipette adhesion frequency experiments: conducting physical experiments in a laboratory, or studying the complex adhesion mechanism by computer simulations based on a kinetic proofreading model 
through a Gillespie algorithm \citep{gillespie1976general}. For both physical and computer experiments, the output is binary, indicating cell adhesion or not \citep{marshall2003direct,zarnitsyna2007memory,huang2010kinetics}.

Binary outputs are common in many applications.
For example, in manufacturing applications, computer simulations are often conducted for failure analysis where the outputs of interest are binary, i.e., failure or success \citep{yan2009reliability}. In other biological problems, binary outputs are observed and evolve in time, such as neuron firing simulations, cell signaling pathways, gene transcription, and recurring diseases \citep{gerstner1997neural,mayrhofer2002devs}.
However, most of the calibration methods are developed for the analysis of continuous outputs. Extensions of the existing calibration methods to binary outputs are not straightforward for two reasons. First, calibration relies on statistical modeling for both computer experiments and physical experiments, but the required modeling techniques for binary outputs are different from those for continuous outputs. Second, the conventional approach to estimating calibration parameters is to match the computer outputs and physical responses, while our interest for binary responses is to match the underlying probability functions for the computer and physical outputs.

To perform calibration in the cell adhesion experiments, we develop a new framework for binary outputs. Calibration parameters are estimated by minimizing the discrepancy between the underlying probability functions in physical experiments and computer experiments. 
The remainder of the paper is organized as follows. Section 2 provides a brief overview of the cell adhesion experiments, including physical experiments and computer simulations. In Section 3, the calibration procedure is described in details.  
Numerical studies are conducted in Section 4 to demonstrate the finite sample performance of estimation. In Section 5, the proposed framework is implemented in the micropipette adhesion experiments. Concluding remarks are given in Section 6. Detailed theoretical proofs are provided in the Appendix and the Supplemental Material.
 
\section{Micropipette Adhesion Frequency Assays}

\subsection{Physical Experiment}

The adaptive immune system defends the organism against diseases by recognition of pathogens by the T cell. T cell receptor (TCR) is the primary molecule on T cell in detecting foreign antigens which are present in major histocompatibility complex (pMHC) molecule expressed by infected cells. Failure to recognize pathogens can result in immune deficiency. False recognition can lead to autoimmune diseases. Therefore, how TCR discriminates different peptides is a central question in the research on adaptive immunity. A micropipette adhesion frequency assay is an important approach to study this TCR-pMHC interaction and mathematically quantify the antigen recognition process.

In a lab, the micropipette adhesion frequency assay is performed as follows (Hung et al. 2008). A red blood cell (RBC, Figure \ref{y:exp}, left) pressurized by micropipette aspiration is used to present the pMHC ligands and to detect binding with the T cell receptor(TCR, Figure \ref{y:exp}, right, only partly shown). The T cell is put into controlled contact with the RBC for a constant area and a preprogrammed duration (Figure \ref{y:exp}B) and then retracted. The output of interest is binary, indicating whether a controlled contact results in adhesion or not. If there is an adhesion between the TCR and pMHC at the end of the contact, retraction will stretch the red blood cell and the RBC membrane will be elongated (Figure \ref{y:exp}C); otherwise the RBC will smoothly restore its spherical shape (Figure \ref{y:exp}A). 

\begin{figure}[ht]
\centering\resizebox{300pt}{100pt} {\includegraphics{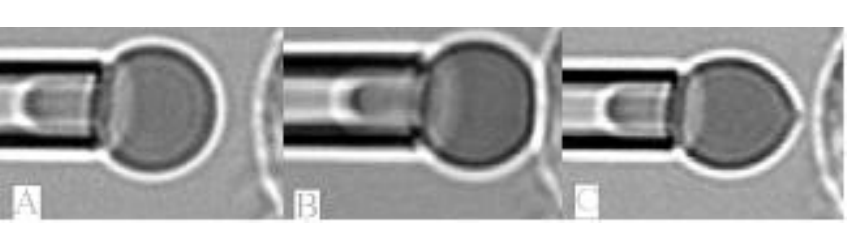}}
\caption{Photomicrographs of micropipette adhesion frequency
assay}\label{y:exp}
\end{figure}

\subsection{Computer Simulation}

Although physical experiments allow accurate measurements of the adhesion frequency, they are time-consuming and often involve complicated experimental manipulation. Moreover, only limited variables of interest can be studied in the lab because of the technical complexity of the biological settings. 
Therefore, a cost-effective approach is to illuminate the unknown biological mechanism in cell adhesion through computer simulations.
Cell adhesion is an intrinsically stochastic process, mathematically stemming from the chemical master equation. 
The basis of its stochasticity includes the 
inherent chemical kinetics of molecules, the quantum indeterminacy in unimolecular reactions, and random perturbations in achieving thermal equilibrium. For cell adhesion, computer simulations can be conducted based on a kinetic proofreading model and simulated through a system of ordinary differential equations (ODEs) governed by the Gillespie algorithm \citep{gillespie1976general}.

Figure \ref{CEplot} illustrates the computer model for the micropipette adhesion frequency assays. At resting state, TCRs in a cluster are considered to be inactive and have on and off rates for pMHC, $\textrm{k}_{\textrm{r}}$ and $\textrm{k}_{\textrm{f}}$, which are unique kinetic parameters to the resting state. Once TCR and pMHC bind, signaling is induced and the cluster of TCRs undergoes one-step kinetic proofreading described by the parameter $\textrm{k}_{\textrm{c}}$. If unsuccessful, nothing happens and the TCR unbinds from the MHC (Major Histocompatibility Complex). If successful, the cluster of TCRs switches states to an upregulated xTCR state, governed by two new kinetic rates, $\textrm{k}_{\textrm{f},\textrm{p}}$ and $\textrm{k}_{\textrm{r},\textrm{p}}$, which are unique to the upregulated state. The cluster of TCRs will then revert back to the resting state after a period of time with a specific half-life parameter, $\textrm{T}_{\textrm{half}}$. Detailed discussions of the simulation can be found in \cite{rittase2018combined}.

In addition to some shared control variables in both physical and computer experiments, several kinetic parameters appear only in the simulation, such as the off-rate enhancement of activated T-cell receptors denoted by $\textrm{k}_{\textrm{r},\textrm{p}}$ in Figure \ref{CEplot}. It is of significant interest to know or estimate the values of these kinetic parameters because they can shed new light on the biological understanding of cell adhesion beyond lab experiments.

\begin{figure}[ht]
\begin{center}
\includegraphics[width=0.7\textwidth]{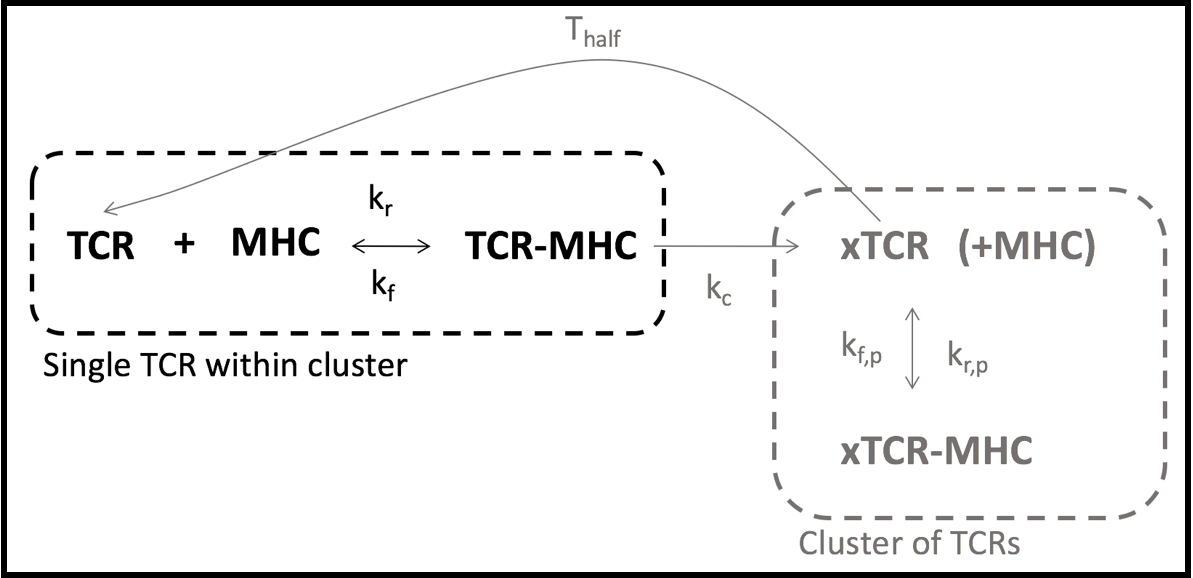}
\caption{Illustration of the computer model.}\label{CEplot}
\end{center}
\end{figure}

\section{A New Calibration Framework}\label{sec:l2calibration}

Suppose $n$ binary outputs are observed from physical experiments and are denoted by $(y^p_1, \cdots, y^p_n)$, where the superscript $p$ stands for ``physical'' and $y^p_i$ is the  $i$th observation taking value 0 or 1. Let $\Omega$ denote a $d$-dimensional experimental region for the control variables $\mathbf{x}$, which is a convex and compact subset of $\mathbb{R}^d$. For each output, the corresponding setting of the control variable is denoted by $\mathbf{x}_i$, where  $i=1,\ldots,n$. Suppose the probability of observing $y=1$ is assumed to have the following model,  
\begin{equation}\label{eq:physicalmodel}
\eta(\mathbf{x}_i)= Pr(y^p_i=1|\mathbf{x}_i)=g(\xi(\mathbf{x}_i)),
\end{equation}
where $g$ is a pre-specified link function. For the binary outputs, we assume $g$ to be the commonly used logistic function, i.e., $g(x)=1/(1+\exp\{-x\})$. The function $\xi(\cdot)$ is  unknown and often called the {\it true process} in the computer experiment literature \citep{kennedy2001bayesian,tuo2015efficient,tuo2016theoretical}.

To study the same scientific problem, a more cost-effective way is to conduct computer simulations (or called computer experiments in this paper). Apart from the control variables $\mathbf{x}$, computer experiments involve calibration parameters, denoted by $\theta$ and $\theta\in \Theta$ which is a compact subset of $\mathbb{R}^q$. These parameters are of scientific interest but their ``true" values are unknown. The binary output from computer experiment is denoted by $y^s(\mathbf{x},\theta)$ with the superscript $s$ standing for ``simulation''. The conditional expectation of $y^s(\mathbf{x},\theta)$ can be written as  
$$p(\mathbf{x},\theta)=Pr(y^s=1|\mathbf{x},\theta),$$ 
where $(\mathbf{x},\theta) \in \Omega\times\Theta$. Even though computer experiments require less experimental manipulation and have smaller cost compared to physical experiments, they can also be computationally intensive (e.g., the $\Lambda$-cold dark matter model in \cite{higdon2013computer} and the high-fidelity simulation in \cite{mak2017efficient}). Therefore, it is not practical to have simulation conducted over the entire experimental region $\Omega\times\Theta$. Instead, the computer experiments are conducted by employing a careful design of experiment, such as space-filling designs \citep{santner2003design}, on a subset of the experimental region.  

The goal of calibration is to search for the setting of the calibration parameters such that the outputs from physical experiments fit as closely as possible to the corresponding outputs of computer experiments.  
This problem is rigorously formulated by \cite{kennedy2001bayesian} in a Bayesian framework. Despite many successful applications using the Bayesian approach (e.g., \cite{higdon2004combining,higdon2008computer,larssen2006forecasting}), recent studies have raised concerns about the \textit{identifiability} issue of the calibration parameters in \cite{kennedy2001bayesian}. See \cite{bayarri2007framework,han2009simultaneous,farah2014bayesian,gramacy2015calibrating}. To tackle this problem, \cite{tuo2015efficient,tuo2016theoretical} propose a frequentist framework based on the method of $L_2$  \textit{calibration}. The idea is to estimate the calibration parameters by minimizing the $L_2$ distance between the physical output and the computer output. It was shown in \cite{tuo2015efficient,tuo2016theoretical} that the calibration method achieves estimation consistency with an optimal convergence rate. 

Although there is a rich literature on calibration, the existing approaches  focus mainly on continuous outputs. 
Inspired by the optimality of the frequentist approach proposed by \cite{tuo2016theoretical}, we  develop a calibration framework for binary outputs using the idea of $L_2$ projection. Ideally, $\theta$ can be obtained by minimizing the discrepancy measured by the $L_2$ distance between the underlying probability functions in the physical and computer experiments. This can be written as 
\begin{equation}\label{L2part1}
\theta^*=\arg\min_{\theta\in\Theta}\|\eta(\cdot)-p(\cdot,\theta)\|_{L_2(\Omega)},
\end{equation}
where the  $L_2$ norm is defined by $\| f \|_{L_2(\Omega)}=(\int_{\Omega} f^2)^{1/2}$. 
The direct calculation of (\ref{L2part1}), however, is not feasible because the true process $\xi(\cdot)$ in \eqref{eq:physicalmodel} is unknown and therefore $\eta(\cdot)$ is unknown. Furthermore, 
$p(\cdot,\cdot)$ is often unknown because computer experiments are usually too complex to admit a closed-form expression.

Instead of solving (\ref{L2part1}) directly, we propose to perform the $L_2$ calibration based on the estimates of $\eta(\cdot)$ and $p(\cdot,\theta)$. First, the true process $\xi(\cdot)$ is estimated by kernel logistic regression, 
that is,
\begin{equation}\label{eq:fitphysical}
\hat{\xi}_n:=\arg\max_{\xi\in\mathcal{N}_{\Phi}(\Omega)}\frac{1}{n}\sum^n_{i=1}\left(y^p_i\log g(\xi(\mathbf{x}_i))+(1-y^p_i)\log(1-g(\xi(\mathbf{x}_i)))\right)+\lambda_n\|\xi\|^2_{\mathcal{N}_{\Phi}(\Omega)},
\end{equation}
where $\|\cdot\|_{\mathcal{N}_{\Phi}(\Omega)}$ is the norm of the reproducing kernel Hilbert space ${\mathcal{N}_{\Phi}(\Omega)}$ generated by a given positive definite reproducing
kernel $\Phi$, and $\lambda_n>0$ is a tuning parameter, which can be chosen by some model selection criterion like cross-validation. 
Kernel logistic regression is chosen here because of its asymptotic properties as shown in Appendix \ref{sec:asym_klr}, which is critical for the development of estimation consistency of calibration parameters and their semiparametric efficiency.
The optimal function \eqref{eq:fitphysical} has the form  $\hat{\xi}_n(\mathbf{x})=\hat{b}+\sum^n_{i=1}\hat{a}_i\Phi(\mathbf{x}_i,\mathbf{x})$, where $\hat{b}$ and $\{\hat{a}_i\}^n_{i=1}$ can be solved by the iteratively re-weighted least squares algorithm.
Detailed discussions can be found in \cite{green1985semi,hastie1990generalized,wahba1994soft,zhu2005kernel}. Based on the estimated true process, we then have $\hat{\eta}_n=g(\hat{\xi}_n)$. 
Because the computer outputs are binary, $p(\cdot,\cdot)$ is not observable and needs to be estimated. Therefore, we assume that a surrogate model $\hat{p}_N(\cdot,\cdot)$
can be constructed as a good approximation to $p(\cdot,\cdot)$ based on $N$ computer outputs, where $N$ is assumed to be larger than $n$ because computer experiments are usually cheaper than physical experiments. 
Theoretically, methods developed for binary classification that satisfy Assumptions C1 and C2, given in the Supplemental Material S1, can be used to estimate $p(.,.)$.  Under some regularity conditions, the emulators constructed by the existing methods, such as 
Gaussian process classification \citep{williams1998bayesian, nickisch2008approximations,sung2017generalized}, satisfy the assumptions and can be employed in this framework. 
Given $\hat{\eta}_n$ and $\hat{p}_N(\cdot,\cdot)$, we are ready to estimate the calibration parameters by minimizing the $L_2$ projection as follows,
\begin{equation}\label{eq:calibration}
\hat{\theta}_n=\arg\min_{\theta\in\Theta}\|\hat{\eta}_n(\cdot)-\hat{p}_N(\cdot,\theta)\|_{L_2(\Omega)}.
\end{equation}  
The calculation of the $L_2$ norm can be approximated by numerical integration methods in practice, such as Monte Carlo integration \citep{caflisch1998monte}, and the minimization problem in \eqref{eq:calibration} can be solved by some optimization methods with respect to the physically plausible domain defined for $\Theta$. For example, in Section \ref{sec:numericalstudy} and \ref{sec:realdata}, $\Theta$'s are assumed to be closed rectangles, so the optimization method of \cite{byrd1995limited} which allows for box constraints can be employed.

Denote 
$
\eta_0(\mathbf{x})=g(\xi_0(\mathbf{x}))
$, where $\xi_0$ is the true process, it can be shown that the fitted model $\hat{\eta}_n$ converges to $\eta_0$ asymptotically and the optimal convergence rate is discussed in Appendix \ref{sec:asym_klr}. Furthermore, the following result shows that $\hat{\theta}_n$ obtained by the $L_2$ calibration in \eqref{eq:calibration} is consistent with the true calibration parameter $\theta^*$ in \eqref{L2part1} and follows an asymptotically normal distribution. The proof is given in Appendix \ref{sec:proofconsistent}.

\begin{theorem}\label{thm:calibration}
Under the regularity assumptions given in Supplemental Material \ref{sec:assumption}, 
we have 
\begin{itemize}   
\item[(i)] 
\begin{equation}
\label{eq:L2thm}
\hat{\theta}_n-\theta^*=2V^{-1}\left(\frac{1}{n}\sum^n_{i=1}(Y^p_i-\eta(\mathbf{x}_i))\frac{\partial p}{\partial\theta}(\mathbf{x}_i,\theta^*)\right)+o_p(n^{-1/2}),
\end{equation}
where $$V=\mathbb{E}\left[\frac{\partial^2}{\partial\theta\partial\theta^T}( \eta(X)-p(X,\theta^*))^2\right].$$
\item[(ii)] $
\sqrt{n}(\hat{\theta}_n-\theta^*)\xrightarrow{d}\mathcal{N}(0,4V^{-1}WV^{-1})$,
where $W$ is positive definite and can be written as
\begin{equation}\label{eq:W}
W=\mathbb{E}\left[\eta(X)(1-\eta(X))\frac{\partial p}{\partial\theta}(X,\theta^*)\frac{\partial p}{\partial\theta^T}(X,\theta^*)\right].
\end{equation}
\end{itemize}
\end{theorem}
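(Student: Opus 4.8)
The plan is to treat $\hat{\theta}_n$ as an M-estimator and follow the classical route — consistency, a mean-value (Bahadur-type) expansion of the first-order conditions, a central limit theorem for the leading linear term — with the extra twist that the ``score'' involves two plug-in pieces, $\hat{\eta}_n$ from kernel logistic regression and $\hat{p}_N$ from the emulator. Write $M_n(\theta)=\|\hat{\eta}_n(\cdot)-\hat{p}_N(\cdot,\theta)\|^2_{L_2(\Omega)}$ and $M(\theta)=\|\eta(\cdot)-p(\cdot,\theta)\|^2_{L_2(\Omega)}$, so $\hat{\theta}_n$ minimizes $M_n$ and $\theta^*$ minimizes $M$. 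First I would establish $\hat{\theta}_n\xrightarrow{p}\theta^*$: the results of Appendix \ref{sec:asym_klr} give $\hat{\eta}_n\to\eta$ and Assumptions C1 and C2 give $\hat{p}_N(\cdot,\theta)\to p(\cdot,\theta)$ uniformly over $\theta$, hence $\sup_{\theta\in\Theta}|M_n(\theta)-M(\theta)|\xrightarrow{p}0$; combined with the well-separatedness of $\theta^*$ postulated among the regularity assumptions, the standard argmin-consistency argument (e.g.\ van der Vaart, Theorem 5.7) yields consistency.

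Next I would write the first-order condition. With $\theta^*$ interior (a regularity assumption), $\nabla_\theta M_n(\hat{\theta}_n)=0$; set $\Psi_n(\theta)=-\tfrac12\nabla_\theta M_n(\theta)=\int_\Omega(\hat{\eta}_n-\hat{p}_N(\cdot,\theta))\tfrac{\partial\hat{p}_N}{\partial\theta}(\cdot,\theta)$ and $\Psi(\theta)=\int_\Omega(\eta-p(\cdot,\theta))\tfrac{\partial p}{\partial\theta}(\cdot,\theta)$, noting $\Psi(\theta^*)=0$. A coordinatewise mean-value expansion gives $\hat{\theta}_n-\theta^*=-[\dot\Psi_n(\bar\theta_n)]^{-1}\Psi_n(\theta^*)$ for some $\bar\theta_n$ on the segment joining $\hat{\theta}_n$ and $\theta^*$; uniform convergence of $\hat{\eta}_n,\hat{p}_N$ and their $\theta$-derivatives together with consistency force $\dot\Psi_n(\bar\theta_n)\to\dot\Psi(\theta^*)=-\tfrac12 V$, invertible by assumption. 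Hence $\hat{\theta}_n-\theta^*=2V^{-1}\Psi_n(\theta^*)+o_p(\|\Psi_n(\theta^*)\|)$, and it remains to give a linear expansion of $\Psi_n(\theta^*)$ and to verify it is $O_p(n^{-1/2})$.

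The heart of the argument is the decomposition $\Psi_n(\theta^*)=A_n+R_n$, where $A_n=\int_\Omega(\hat{\eta}_n(x)-\eta(x))\tfrac{\partial p}{\partial\theta}(x,\theta^*)\,dx$, the term $\int_\Omega(\eta-p(\cdot,\theta^*))\tfrac{\partial p}{\partial\theta}(\cdot,\theta^*)=0$ vanishes by definition of $\theta^*$, and $R_n$ collects every term involving $\hat{p}_N-p$ or $\tfrac{\partial\hat{p}_N}{\partial\theta}-\tfrac{\partial p}{\partial\theta}$. I would bound $R_n=o_p(n^{-1/2})$ using the emulator rates in Assumptions C1 and C2 — this is exactly where $N\gg n$ enters, since the product of the $\hat{\eta}_n$-error with the $\hat{p}_N$-error and the $\hat{p}_N$-error integrated against the orthogonal direction $\tfrac{\partial p}{\partial\theta}(\cdot,\theta^*)$ must both be $o(n^{-1/2})$. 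For $A_n$ I would invoke the asymptotic linear (Bahadur) representation of kernel logistic regression from Appendix \ref{sec:asym_klr}: for a fixed smooth $h$, $\int_\Omega(\hat{\eta}_n(x)-\eta(x))h(x)\,dx=\tfrac1n\sum_{i=1}^n(Y^p_i-\eta(\mathbf{x}_i))h(\mathbf{x}_i)+o_p(n^{-1/2})$, i.e.\ the RKHS smoothing/penalization does not distort this linear functional and its plug-in estimate is semiparametrically efficient, provided $\lambda_n$ is taken so that the penalty bias is $o(n^{-1/2})$. Setting $h=\tfrac{\partial p}{\partial\theta}(\cdot,\theta^*)$ gives $A_n=\tfrac1n\sum_{i=1}^n(Y^p_i-\eta(\mathbf{x}_i))\tfrac{\partial p}{\partial\theta}(\mathbf{x}_i,\theta^*)+o_p(n^{-1/2})=O_p(n^{-1/2})$, which together with the previous paragraph yields (i).

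Part (ii) then follows from a Lindeberg--Feller central limit theorem for the i.i.d., conditionally mean-zero summands $(Y^p_i-\eta(\mathbf{x}_i))\tfrac{\partial p}{\partial\theta}(\mathbf{x}_i,\theta^*)$ (here the measure underlying $L_2(\Omega)$ is aligned with the design distribution by a regularity assumption), whose covariance equals $W$ in \eqref{eq:W} because $\mathbb{E}[(Y^p_i-\eta(\mathbf{x}_i))^2\mid\mathbf{x}_i]=\eta(\mathbf{x}_i)(1-\eta(\mathbf{x}_i))$; Slutsky's lemma with the constant matrix $2V^{-1}$ and the $o_p(n^{-1/2})$ remainder from (i) gives $\sqrt{n}(\hat{\theta}_n-\theta^*)\xrightarrow{d}\mathcal{N}(0,4V^{-1}WV^{-1})$, and positive-definiteness of $W$ follows from $\eta(1-\eta)>0$ on $\Omega$ and the assumed linear independence of the components of $\tfrac{\partial p}{\partial\theta}(\cdot,\theta^*)$. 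The main obstacle is the linearization of $A_n$: one must show the bias and penalty contributions of the kernel logistic fit vanish faster than $n^{-1/2}$ and that the influence function of the smoothed estimate, integrated against $h$, collapses to the parametric one — the semiparametric-efficiency step, requiring careful control of the RKHS norm of $\hat{\xi}_n-\xi_0$, the eigenvalue decay of $\Phi$, and the rate of $\lambda_n$, all of which I would borrow from Appendix \ref{sec:asym_klr}. A secondary technical point is the uniform-in-$\theta$ treatment of the emulator, needed both for consistency and for the convergence of $\dot\Psi_n$.
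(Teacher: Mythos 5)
Your skeleton (consistency via uniform convergence of the plug-in $L_2$ criterion, a first-order condition, linearization, CLT plus Slutsky for part (ii)) matches the paper's, but there is a genuine gap at exactly the point you flag as "the heart of the argument." You propose to \emph{invoke} an asymptotic linear (Bahadur) representation $\int_\Omega(\hat{\eta}_n-\eta)h=\frac1n\sum_{i=1}^n(Y^p_i-\eta(\mathbf{x}_i))h(\mathbf{x}_i)+o_p(n^{-1/2})$ "from Appendix \ref{sec:asym_klr}," but that appendix contains no such result: it only gives the rates $\|\hat{\xi}_n\|_{\mathcal{N}_\Phi(\Omega)}=O_p(1)$ and $\|\hat{\eta}_n-\eta_0\|_{L_2(\Omega)}=O_p(\lambda_n^{1/2})$, and the optimal rate $n^{-m/(2m+d)}$ is slower than $n^{-1/2}$, so the representation cannot be read off from those bounds. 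Establishing it is precisely what the paper's proof does, by a different device: since $\hat{\xi}_n$ maximizes the penalized likelihood and, by A4, the direction $\frac{\partial p}{\partial\theta_j}(\cdot,\hat{\theta}_n)$ lies in $\mathcal{N}_\Phi(\Omega)$, the directional-derivative stationarity condition gives the exact identity $0=C_n+D_n+E_n$; then two Donsker-class asymptotic-equicontinuity arguments (the processes $E_{1n}$ and $E_{2n}$) are used to replace the empirical sum in $C_n$ by the integral $\int_\Omega(\hat{\eta}_n-\eta)\frac{\partial p}{\partial\theta_j}(\cdot,\hat{\theta}_n)$ up to $o_p(n^{-1/2})$ and to replace $\hat{\theta}_n$ by $\theta^*$ in $D_n$, while the penalty contribution $E_n$ is bounded by $2\lambda_n\|\hat{\xi}_n\|_{\mathcal{N}_\Phi(\Omega)}\|\frac{\partial p}{\partial\theta_j}(\cdot,\hat{\theta}_n)\|_{\mathcal{N}_\Phi(\Omega)}=o_p(n^{-1/2})$ via B3--B4. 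The proof also exploits the canonical (logit) link so that the weight $h(z)=f(z)/(g(z)(1-g(z)))$ collapses to $1$, which is why the unweighted influence function $(Y^p_i-\eta(\mathbf{x}_i))\frac{\partial p}{\partial\theta}(\mathbf{x}_i,\theta^*)$ appears; your sketch silently assumes this. Without supplying this empirical-process machinery (or an equivalent proof of the linear representation), your argument for part (i) is incomplete.

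A secondary problem is your mean-value expansion $\hat{\theta}_n-\theta^*=-[\dot\Psi_n(\bar\theta_n)]^{-1}\Psi_n(\theta^*)$: differentiating $\Psi_n$ in $\theta$ brings in $\frac{\partial^2\hat{p}_N}{\partial\theta\partial\theta^T}$, whose convergence is not guaranteed by C1--C2 (which control only $\hat{p}_N$ and its first $\theta$-derivatives). The paper avoids this by first reducing the estimated score to the population score $\int_\Omega(p(\cdot,\hat{\theta}_n)-\eta)\frac{\partial p}{\partial\theta_j}(\cdot,\hat{\theta}_n)$ using B2, C1, C2, and only then Taylor-expanding that population map around $\theta^*$, where A4's smoothness of $p$ (not of $\hat{p}_N$) justifies the expansion and yields the factor $-\tfrac12 V(\hat{\theta}_n-\theta^*)$. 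Part (ii) of your proposal is fine and coincides with the paper: the summands are i.i.d.\ with conditional variance $\eta(\mathbf{x}_i)(1-\eta(\mathbf{x}_i))$, giving covariance $W$ and the sandwich limit $4V^{-1}WV^{-1}$.
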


In calibration problems, the parameter of interest is a $q$-dimensional calibration parameter $\theta^*$, while the parameter space of model \eqref{eq:physicalmodel} contains an infinite dimensional function space which covers $\xi$. Therefore, the calibration problem is regarded as a semiparametric problem. If a method can reach the highest estimation efficiency for semiparametric problem, we call it \textit{semiparametric efficient}. See \cite{bickel1993efficient} and \cite{kosorok2008} for details. In Supplemental Material \ref{sec:proofsemiparametric}, we show that, similar to its counterpart for continuous outputs \citep{tuo2015efficient}, the proposed method enjoys the semiparametric efficiency.

\section{Numerical Study}\label{sec:numericalstudy}
\subsection{Example with One Calibration Parameter}\label{sec:onedimsimulation}

We start with a simple example where one control variable is shared in both physical and computer experiments and one calibration parameter is involved in the computer experiments. 
Assume that the binary physical outputs are randomly generated from a Bernoulli distribution denoted by $Y^p\sim Ber(\eta(x))$, where 
\[
\eta(x)=\exp\left\{\exp\left(-0.5x\right)\cos\left(3.5{\pi}x\right)\right\}/3
\]
$x\in\Omega=[0,1]$, and $\eta(x)\in[0,1]$ for all $x\in\Omega$. 
Based on \eqref{eq:physicalmodel} where $\eta(x)=g(\xi(x))$ and $g$ is a logistic function, the true process is $\xi(x)=\log(\eta(x)/(1+\eta(x)))$ in the numerical study.
The binary computer outputs  $Y^s$ are randomly generated from $Ber(p(x,\theta))$, where 
\[
p(x,\theta)=\exp\left\{\exp\left(-0.5x\right)\cos\left(3.5{\pi}x(\theta+0.7)\right)\right\}/3
\]
$x\in\Omega=[0,1],\theta\in\Theta=[0,1]$, and $p(x,\theta)\in[0,1]$ for all $x\in\Omega$ and $\theta\in\Theta$. Figure \ref{fig:numericstudy} shows the functions $\eta(x)$ (black lines) and $p(x,\theta)$ with three different calibration parameters (blue dashed lines). 
In this example, the true calibration parameter is $\theta^*=0.3$ because it leads to zero discrepancy between the probability functions in the physical and computer experiments (Figure \ref{fig:numericstudy}(b)). 


\begin{figure}[ht]
\centering
\includegraphics[width=\linewidth]{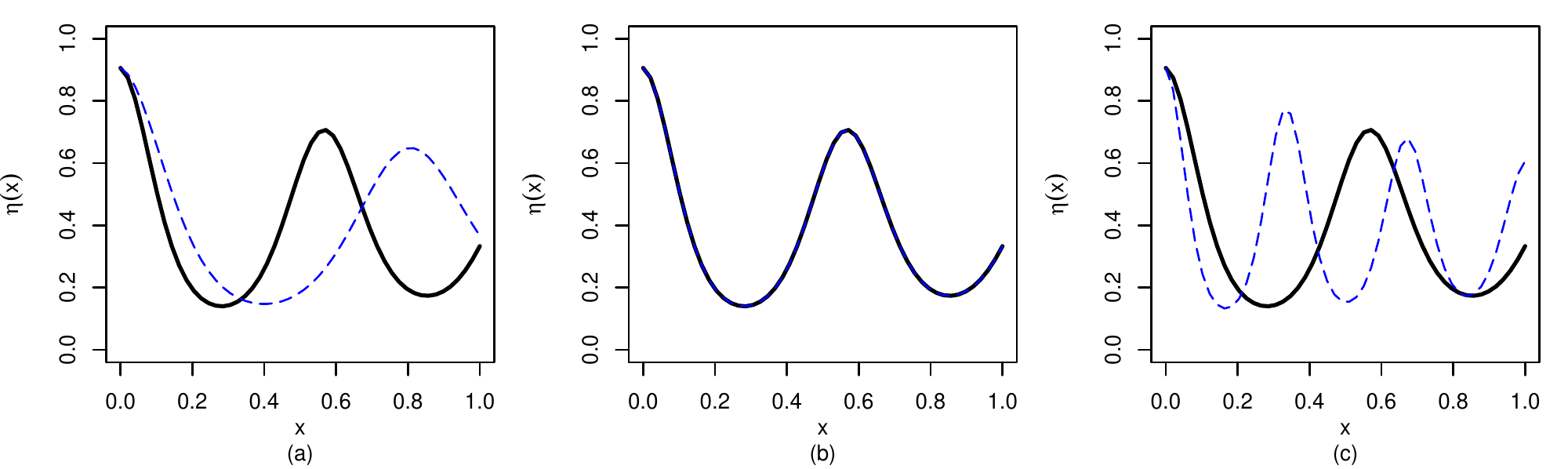}
\caption{True functions in the physical experiment and computer experiment. Black line represents the true function of the physical experiment, and blue line represents the true function of the computer experiment with calibration parameter (a)  $\theta=0$; (b) $\theta=0.3$; (c) $\theta=1$.}
\label{fig:numericstudy}
\end{figure}


Consider the physical experiments with sample size $n$, where the inputs $\{x^p_i\}^{n}_{i=1}$ are selected with equal space in $[0,1]$ and the corresponding outputs are $\{y^p_i\}^n_{i=1}$. The sample size for computer experiments is $N$, the inputs $\{(x^s_i,\theta_i)\}^{N}_{i=1}$ are uniformly selected from $[0,1]^2$ and the corresponding outputs are $\{y^s_i\}^N_{i=1}$. The calibration parameter is estimated by 
\eqref{eq:calibration}, in which $\hat{\eta}_n(x)$ is obtained by the kernel logistic regression \eqref{eq:fitphysical} and the Mat\'ern kernel function \eqref{eq:matern} is chosen with $\nu=2.5$. The tuning parameters $\rho$ and $\lambda_n$ are chosen by cross-validation. The emulator $\hat{p}_N(x,\theta)$ for computer experiments is obtained by the Gaussian process classification proposed by \cite{williams1998bayesian} with the radial basis function kernel 
\begin{equation}\label{eq:RBF}
\Phi_{\phi}((x_i,\theta_i),(x_j,\theta_j))=\exp\left\{-\phi\left((x_i-x_j)^2+(\theta_i-\theta_j)^2\right) \right\},
\end{equation}
where the tuning parameter $\phi$ is chosen by cross-validation in each simulation.

The calibration performance is summarized by the first two rows in Table \ref{tbl:numericstudy} based on 100 replicates. Two combinations of sample sizes, $n$ and $N$, are considered. For each combination, the mean and standard deviation (SD) of the estimated calibration parameters are reported. In general, the proposed method provides reasonable estimation accuracy for the calibration parameter. It also appears that the standard deviation decreases with the increase of sample size. Furthermore, by plugging the true functions $\eta(x)$ and $p(x,\theta)$ and the true calibration parameter $\theta^*=0.3$ into the result in Theorem \ref{thm:calibration}(ii), the asymptotic distribution of  $\hat{\theta}_n$ is $\mathcal{N}(0.3,0.1904/n)$. For the cases of $n=50$ and $100$, the asymptotic standard deviations are 0.0617 and 0.0436, respectively. Not surprisingly, these values are smaller than the empirical standard deviations in Table \ref{tbl:numericstudy} because the estimation uncertainties of $p$, $\eta$, and $\theta^*$ are neglected. The asymptotic distribution of $\hat{\theta}_n$ for $n=50$ is illustrated as the dashed line in Figure \ref{fig:theta_density} and the corresponding empirical distribution is shown as the solid line. It appears that even with a relatively small sample size, the empirical distribution is reasonably close to the asymptotic distribution.

Since there is no existing method that can address calibration problems with binary outputs, the proposed framework is compared with a naive approach that estimates the calibration parameters by minimizing the misclassification rate with respect to $\theta$ at $x^p_i$'s. That is,
\[
\hat{\theta}_{\text{naive}}=\arg\min_{\theta\in\Theta}\sum^n_{i=1}I(y^p_i\neq\hat{y}^s(x^p_i,\theta)),
\]
where $I(\cdot)$ is an indicator function and $\hat{y}^s(x,\theta)$ is the classification rule trained by the data in computer experiments, $\{((x^s_i,\theta_i),y_i^s)\}^N_{i=1}$. Here we consider a random forest classification which provides the best classification performance under the current setting. The calibration results are given by the last two rows of Table \ref{tbl:numericstudy}. According to the results in Table \ref{tbl:numericstudy}, the $L_2$-calibration outperforms the naive method by providing a smaller bias and lower standard deviation in estimation.

\begin{table}[!h]
\centering
\begin{tabular}{
C{2.8cm}C{0.8cm}C{0.8cm}|C{1.5cm}C{1.5cm}C{1.5cm}}
\toprule
Method & $n$ & $N$& $\theta^*$  & Mean & SD \\
\midrule
\multirow{2}{*}{$L_2$-calibration} & 50 & 400 & 0.3 &   0.3050 & 0.1596\\
& 100 & 900 & 0.3 & 0.3169  & 0.1437 \\
\midrule
Naive & 50 & 400 & 0.3 &   0.3730 & 0.2205\\
method & 100 & 900 & 0.3 & 0.3280  & 0.1693 \\
\bottomrule
\end{tabular}
\caption{Mean and standard deviation of the estimated calibration parameters in 100 replicates.}
\label{tbl:numericstudy}
\end{table}

\begin{figure}[ht]
\centering
\includegraphics[width=0.45\linewidth]{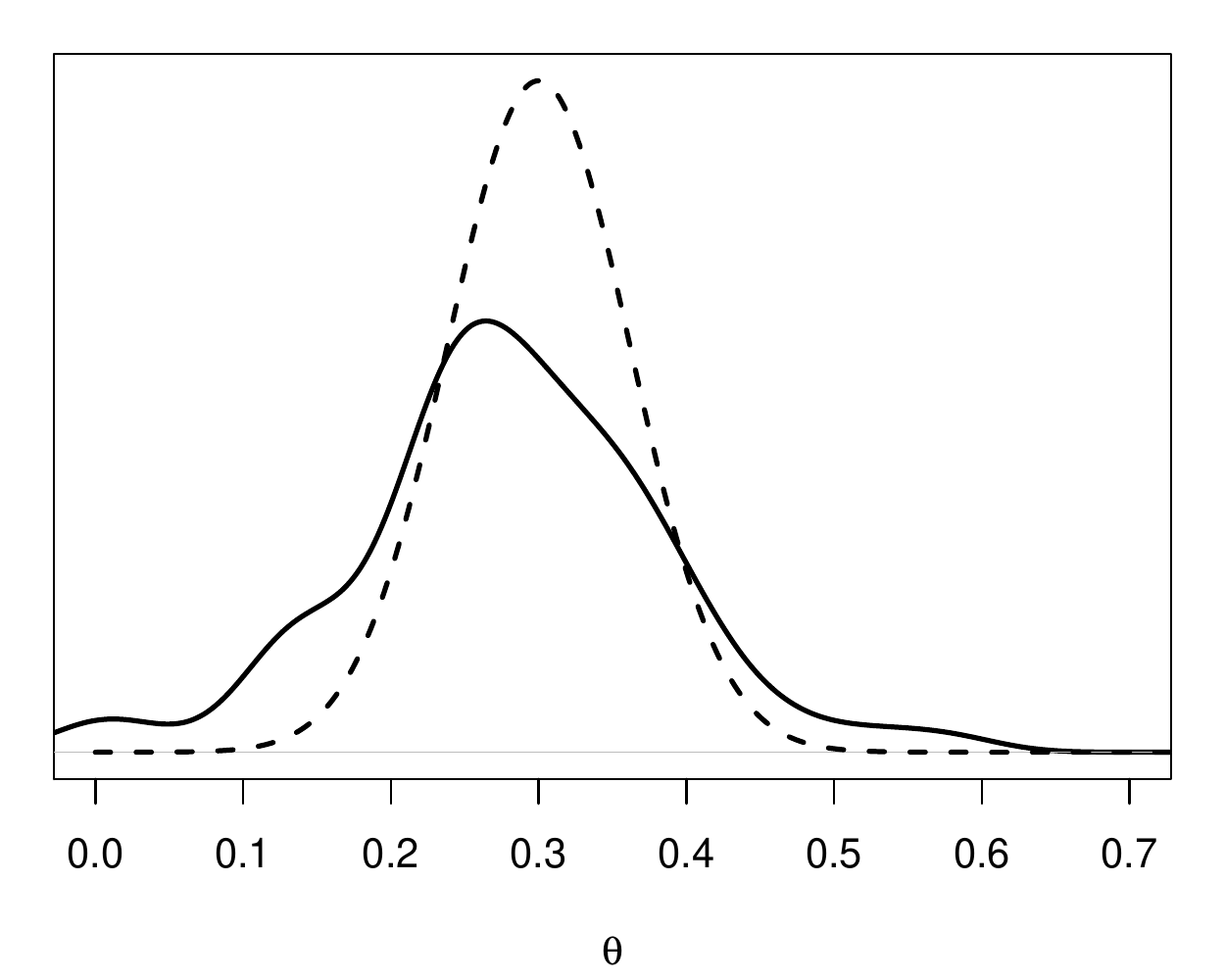}
\caption{The comparison between empirical and asymptotic distributions of the estimates $\hat{\theta}_n$ with $n=50$ and $N=400$. The black line represents the empirical distribution and the dashed line represents the asymptotic distribution.}
\label{fig:theta_density}
\end{figure}

\subsection{Example with Three Calibration Parameters}

In this subsection, we demonstrate the finite sample performance of the proposed method based on an imperfect computer model \citep{tuo2015efficient}, in which there is a discrepancy between physical and computer experiments denoted by $\delta(\mathbf{x})$. Suppose there are three calibration parameters in the computer experiments and two control variables involved in the physical and computer experiments, where $\mathbf{x}\in\Omega=[0,1]^2$. The binary physical outputs are randomly generated from $Y^p\sim Ber(\eta(\mathbf{x}))$, where
\[
\eta(\mathbf{x})=\exp\left\{-4[(2x_1-1)^2+(2x_2-1)^2] \right\}(2x_1-1)+0.65,
\]
which is modified from the 2-dimensional function introduced in \cite{gramacy2008gaussian}, and $\eta(\mathbf{x})\in[0,1]$ for all $\mathbf{x}\in\Omega=[0,1]^2$. Similar to the previous example, the true process can be written as  $\xi(\mathbf{x})=\log(\eta(\mathbf{x})/(1+\eta(\mathbf{x})))$. In the computer experiment, the input variables $(\mathbf{x},\theta)$ are 5-dimensional and the binary computer outputs are randomly generated from $Y^s\sim Ber(p(\mathbf{x},\theta))$, where  
\[
p(\mathbf{x},\theta)=\eta(\mathbf{x})+0.35[(\theta_1-0.3)^2+(\theta_2-0.5)^2+(\theta_3-0.7)^2]+\delta(\mathbf{x}),
\]
$\theta\in\Theta=[0,1]^3$, $p(\mathbf{x},\theta)\in[0,1]$ for all $\mathbf{x}\in\Omega=[0,1]^2$ and $\theta\in\Theta=[0,1]^3$, and $\delta(\mathbf{x})=0.01(x_1-x_2)^2$. By minimizing the $L_2$ distance as in \eqref{L2part1}, we have $\theta^*=(0.3,0.5,0.7)$. Note that the computer model $p(\mathbf{x},\theta)$ is \textit{imperfect} because even with the optimal setting $\theta^*$, there is still a discrepancy between the functions in the computer and the physical experiments.

Similar to the previous example, two combinations of the sample sizes, $n$ and $N$, are considered. The Mat\'ern kernel function \eqref{eq:matern} with $\nu=2.5$ is chosen for fitting $\hat{\eta}_n(\cdot)$, and the tuning parameters in $\hat{\eta}_n(\cdot)$ and $\hat{p}_N(\cdot,\theta)$ are chosen by cross-validation. The estimation results are summarized in Table \ref{tbl:3dnumericstudy2} based on 100 replicates. They show that the proposed method can estimate the three calibration parameters accurately even with an imperfect computer model. The estimation accuracy can be further improved by the increase of sample size, which is similar to the previous example and agrees with the asymptotic results in Theorem \ref{thm:calibration}. 


\begin{table}[!h]
\centering
\begin{tabular}{
C{0.8cm}C{1cm}C{0.8cm}|C{1.5cm}C{1.5cm}C{1.5cm}}
\toprule
$n$ & $N$ &  & $\theta^*$  &  Mean & SD\\
\midrule
\multirow{3}{*}{150} & \multirow{3}{*}{500} & $\hat{\theta}_1$ &0.3 &   0.3606& 0.1794\\
& & $\hat{\theta}_2$ &0.5 &   0.4900 &0.2025 \\
& & $\hat{\theta}_3$ &0.7 &   0.6284 & 0.1763 \\
\midrule
\multirow{3}{*}{250} & \multirow{3}{*}{1500} & $\hat{\theta}_1$ &0.3 &   0.3395 & 0.1711 \\
& & $\hat{\theta}_2$ &0.5 &   0.4956 & 0.2007 \\
& & $\hat{\theta}_3$ &0.7 &   0.6714 & 0.1698 \\
\bottomrule
\end{tabular}
\caption{Mean and standard deviation (SD) of estimated calibration parameters in 100 simulations.}
\label{tbl:3dnumericstudy2}
\end{table}

\section{Analysis of cell adhesion computer experiments}\label{sec:realdata}

In the immune system, it has long been known that T cells utilize their TCR to recognize antigenic pMHC. However, much is still unknown regarding the underlying mechanism.   
To understand the pMHC interactions on T cells, there are two approaches. One is to perform physical experiments in a lab as described in Section 2.1 and the other is to conduct computer simulations as described in Section 2.2.

There are two shared control variables, denoted by $x_{Tc}$ and $x_{Tw}$, in both physical and computer experiments. 
Additionally, four calibration parameters, denoted by $x_{K_{c}},x_{K_{f}},x_{K_{r}}$ and $x_{K_{r,p}}$, only appear in the computer simulations. Their values are of biological interest but cannot be measured or controlled in the lab experiments. The detailed descriptions for these variables are given in Table \ref{realdata}. For the lab experiments, the values of $x_{Tc}$ and $x_{Tw}$ are randomly chosen from the sample space $[0.25,5]\times [1,6]$. The sample size is $n=272$, which is relatively small because it is time-consuming to manipulate different settings in the lab. The design for computer experiments is a 120-run OA-based Latin hypercube design \citep{Tang1993}, and for each run it consists of 10 replicates, i.e., $N=1,200$. To capture the cell-to-cell variability, replications are desirable for physical experiments as well.
However, it is not available in this study due to time and resource constraints. 


\begin{table}[ht] 
\centering
\begin{tabular}{c|c|c}  
\toprule
variable & description & range\\
\midrule
$x_{Tc}$ & cell-cell contact time (second) & [0.25,5]\\
$x_{Tw}$ & waiting time in between contacts (second)&[1,6]\\
\midrule
$x_{Kc}$ & kinetic proofreading rate for activation of cluster (1/second) & [0.1,100]\\
$x_{K_{f}}$ & on-rate enhancement of inactive TCRs ($\mu m^2$/second) &[$10^{-8},10^{-5}$]\\
$x_{K_{r}}$ & off-rate enhancement of inactive TCRs (1/second) & [0.1,10]\\
$x_{K_{r,p}}$ & off-rate enhancement of activated TCRs (1/second) & [0.01,100]\\
\bottomrule
\end{tabular} \caption{Input variables in cell adhesion frequency assay experiments ($x_{Tc}$ and $x_{Tw}$ are control variables and $x_{K_{c}},x_{K_{f}},x_{K_{r}}$ and $x_{K_{r,p}}$ are calibration parameters).}
  \label{realdata}
\end{table}

The physical experiments are analyzed by the kernel logistic regression \eqref{eq:fitphysical} and the fitted model can be written as
\begin{equation*}\label{eq:realphysicalfitting}
\hat{\xi}_n(\mathbf{x})=\hat{b}+\sum^n_{i=1}\hat{a}_i\Phi(\mathbf{x}_i,\mathbf{x}) \quad \mbox{and} \quad \hat{\eta}_n(\mathbf{x})=1/(1+\exp\{-\hat{\xi}_n(\mathbf{x})\}),
\end{equation*}
where $\mathbf{x}=(x_{Tc},x_{Tw})$, $\Phi$ is the Mat\'ern kernel function \eqref{eq:matern} with $\nu=2.5$ and $\rho=0.5$, and $\hat{b}$ and $\{\hat{a}_i\}^n_{i=1}$ are the estimated coefficients. The tuning parameter of the kernel logistic regression is $\lambda_n=0.006$. 
Both of the tuning parameters $\rho$ and $\lambda_n$ are chosen by cross-validation.
The binary data in the physical experiments are plotted in the left panel of Figure \ref{fig:etahat_realdata}. The fitted model $\hat{\eta}_n(\mathbf{x})$ is illustrated in the right panel as a function of contact time and waiting time. 
From biological point of view, the contact duration $x_{Tw}$ is expected to have a positive impact on the adhesion probability because a longer contact period provides a higher chance for T cells and antigen to bind. The impact from waiting time $x_{Tw}$ is expected to be smaller for short contact time and becomes more significant for longer contact time. This is because waiting time is designed for the TCR and antigen to stay in the resting state and avoid a potential memory effect on cell adhesion which is often associated with larger contact duration. This biological information is consistent with the fitted model in Figure \ref{fig:etahat_realdata}.

\begin{figure}[ht]
\centering
\includegraphics[width=0.45\linewidth]{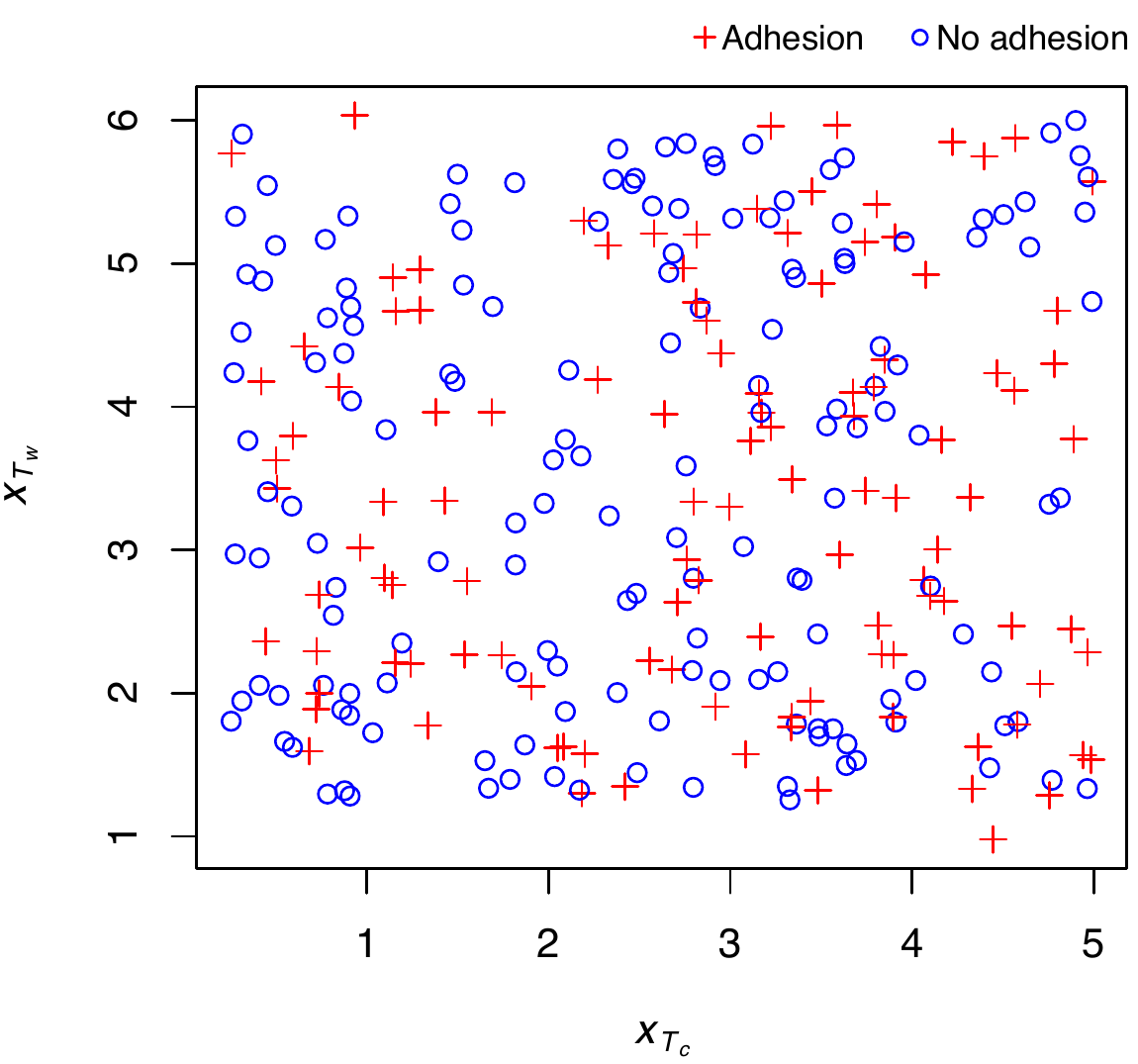}\quad
\includegraphics[width=0.45\linewidth]{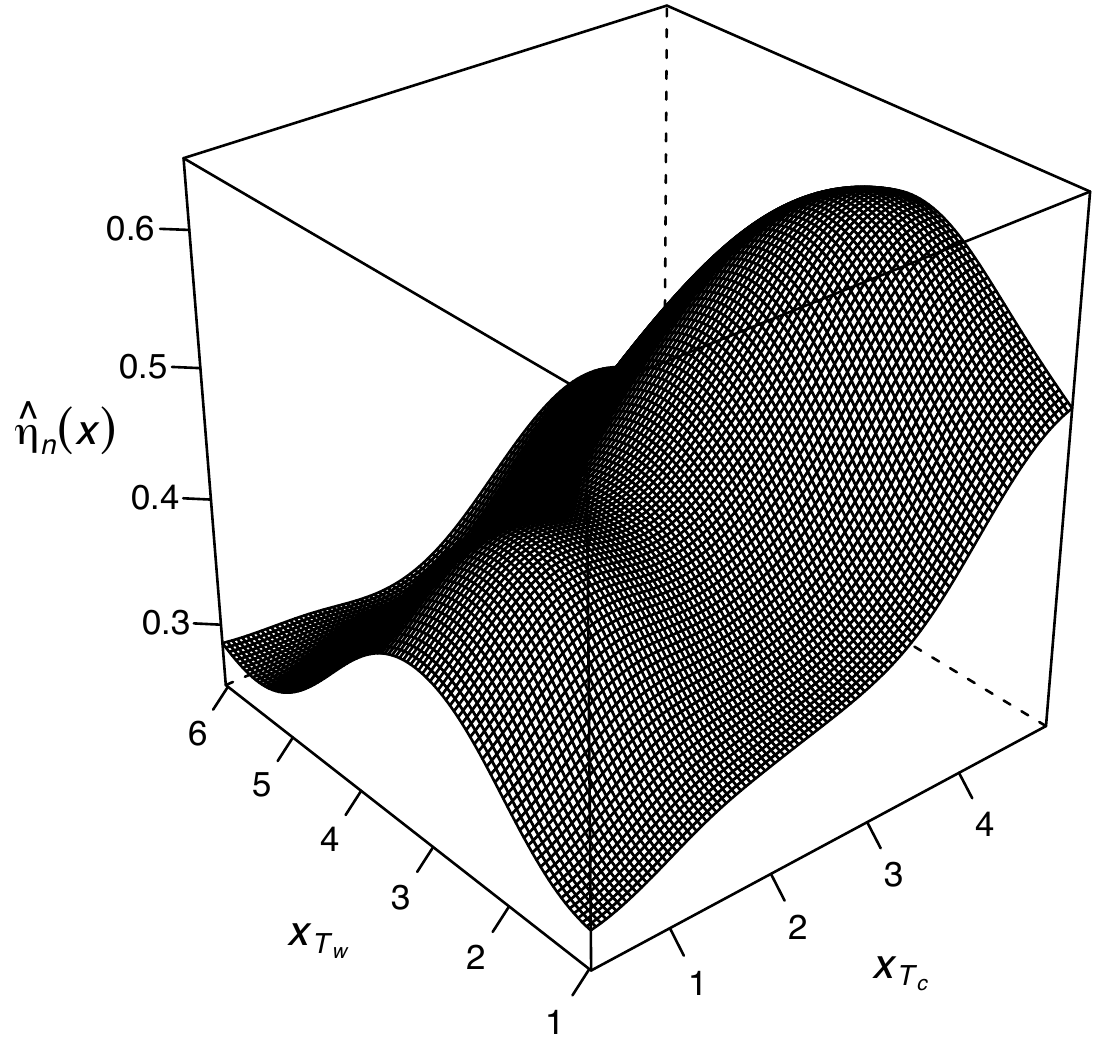}
\caption{Illustration of the data in the physical experiments (left) and the corresponding fitted model (right).}
\label{fig:etahat_realdata}
\end{figure}

The computer experiments are analyzed by the Gaussian process classification proposed by \cite{williams1998bayesian} with the radial basis function kernel \eqref{eq:RBF} with $\phi=21$, which is chosen by cross-validation. The fitted model is $\hat{p}_N(\mathbf{x},\theta)$, where $\mathbf{x}=(x_{Tc},x_{Tw})$ and $\theta= (x_{K_{c}},x_{K_{f}},x_{K_{r}},x_{K_{r,p}})$. 
To gauge the importance of the calibration parameters on adhesion probability, a sensitivity analysis using Monte Carlo estimate of Sobol indices \citep{sobol1993sensitivity} is performed on $\hat{p}_N(\mathbf{x},\theta)$. The sensitivity analysis studies how variable the model output is to changes in the input parameters, and determines which parameters are responsible for the most variation in the model output. We refer more details of sensitivity analysis to \cite{sobol1993sensitivity} and  Chapter 7 of \cite{santner2003design}. The result of the sensitivity analysis is given in Figure \ref{fig:sensitivity}. Each point indicates the estimated Sobol' sensitivity index for each calibration parameter, which measures the proportion of the variation in $\hat{p}_N(\mathbf{x},\theta)$ that is due to the given parameter. The line indicates the corresponding 95\% confidence interval. Figure \ref{fig:sensitivity} shows that all the calibration parameters have impact on the adhesion probability because the estimated indexes are greater than 0, in which $x_{K_f}$ has the highest impact on the adhesion probability.

\begin{figure}[ht]
\centering
\includegraphics[width=0.5\linewidth]{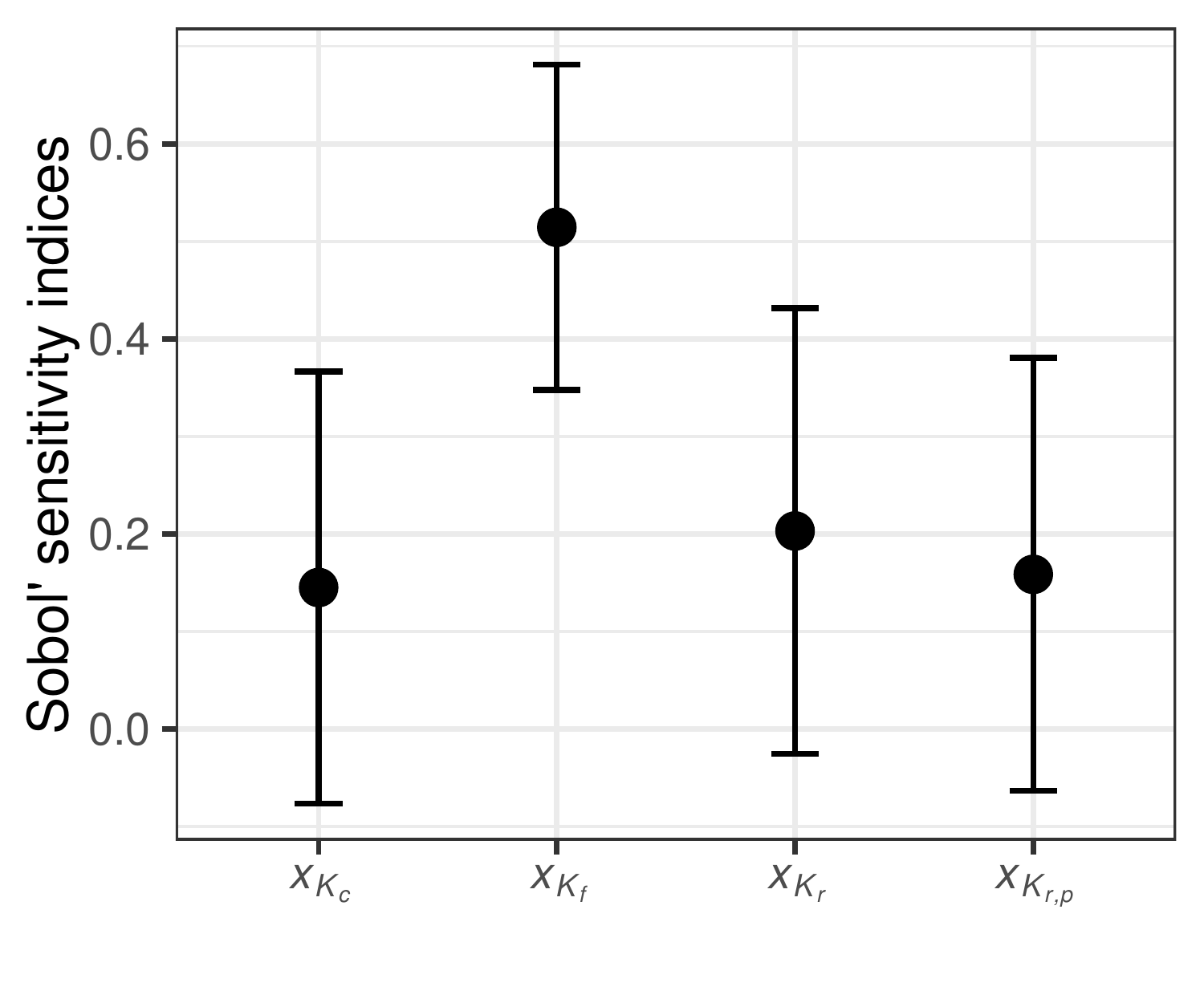}
\caption{Sensitivity analysis using Monte Carlo estimate of Sobol indices \citep{sobol1993sensitivity}. The points are the estimated Sobol' sensitivity indexes for the calibration parameters and the lines are the 95\% confidence intervals.}
\label{fig:sensitivity}
\end{figure}

Based on the two fitted models, the $L_2$ calibration procedure can then be implemented. The estimated calibration parameters are 
\begin{equation}\label{optCAL}
(x_{K_{c}},x_{K_{f}},x_{K_{r}},x_{K_{r,p}})= (3.16,7.77\times 10^{-7}, 0.79, 3.68),
\end{equation}
the corresponding standard deviations calculated based on 
Theorem \ref{thm:calibration} (ii) are (2.15, $3.56\times 10^{-7}$, 0.40, 1.68), and the $L_2$ distance is 0.0461.  
By plugging in the estimated calibration parameter to the emulator $\hat{p}_N(\cdot,\theta)$, the adhesion probabilities obtained from computer experiments (red dashed lines) are compared with those from physical experiments (black lines) in Figure \ref{fig:difference} as a function of the two control variables, contact time and waiting time. It appears that the emulator with the estimated calibration parameters can reasonably capture the trend in the physical experiments.

\begin{figure}
\centering
\includegraphics[width=0.9\linewidth]{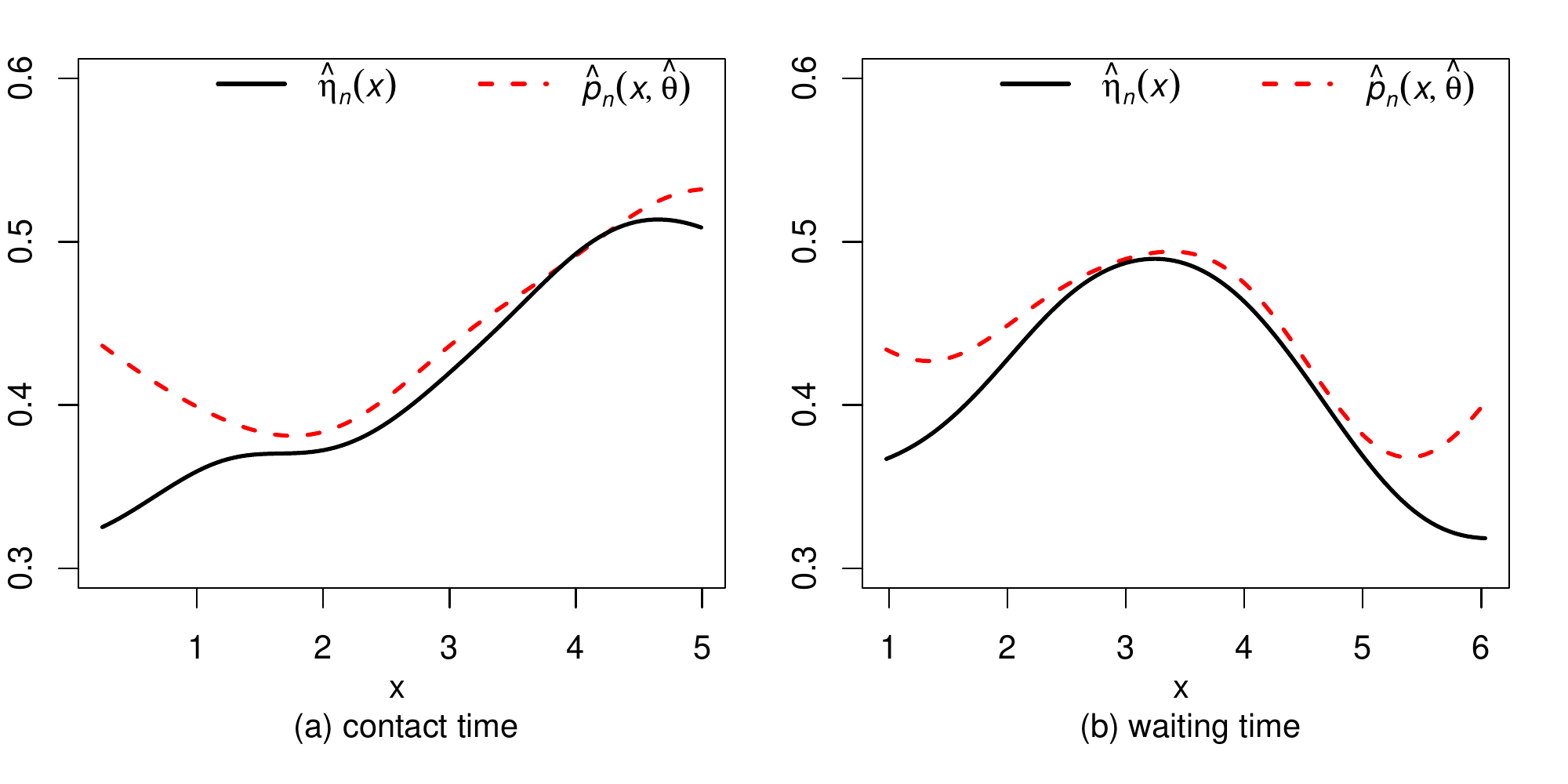}
\caption{Comparison of the fitted model from physical experiments and the fitted model from computer experiments with the calibration parameters given by (\ref{optCAL}). Black lines represent   $\hat{\eta}_n$ and red dash lines represent $\hat{p}_N(\cdot,\hat{\theta})$, where $\hat{\theta}$ is given by  (\ref{optCAL}).}
\label{fig:difference}
\end{figure}

The proposed calibration procedure provides insight into the values of the calibration parameters in the T cell adhesion experiments, which are not available in physical experiments due to the small time scale at which this mechanism operates and the limitation of existing experimental techniques. 
For example, the estimated $x_{K_{f}}$, and $x_{K_{r}}$ are relatively small, which indicates lower on-and-off kinetic rates for the current biological system in the resting state.
The estimated 
$x_{K_{r, p}}$ is larger than $x_{K_{r}}$, which implies a kinetic off-rate increase after TCR binding. Such an increase indicates that the TCR, upon recognizing and binding antigen, quickly releases that antigen, allowing another TCR to rebind. This implies that the mechanism would permit several TCRs to interact with the same antigen in quick succession, which cannot be observed in physical experiments \citep{rittase2018combined}. 




\section{Summary and Concluding Remarks}\label{sec:discussion}
How to estimate the calibrate parameters in computer experiments is an important problem, but the existing calibration methods mainly focus on continuously outputs.  
Motivated by an analysis of single molecular experiments, we propose a new calibration framework for binary responses. The estimate of the calibration parameters is shown to be asymptotically consistent and semiparametric efficient. Our numerical studies confirm the estimation accuracy in finite-sample performance, and the application in single molecular studies illustrates that the proposed calibration method reveals important insight on the underlying adhesion mechanism which cannot be directly observed through existing methods.

Our work lays the foundation for calibration problems with binary responses. This work can be extended in several directions. First, it can be extended to other non-Gaussian data, such as count data. To do so, the logistic function $g$ in \eqref{eq:physicalmodel} can be replaced by other link functions of the exponential family type, such as the log function for Poisson distribution. The true process $\xi$ can then be estimated by maximizing the objective function in \eqref{eq:fitphysical} where the likelihood function of the Bernoulli distribution is replaced by other exponential family distributions. The theoretical results in Section \ref{sec:l2calibration}, however, cannot be directly applied to other exponential family distributions. Moreover, aside from the proposed frequentist framework, the development for Bayesian framework with binary responses is worth exploring. Most of the Bayesian calibration methods suffer from the identifiability issue because the calibration parameters are unidentifiable due to the unknown discrepancy between the true process and the computer model. There are some recent developments, such as the orthogonal Gaussian process \citep{plumlee2017bayesian} and the projected kernel calibration \citep{tuo2017adjustments}, that can address the identifiability issue with continuous outputs. The extension to develop a Bayesian framework for binary outputs is an interesting and important topic for our future research.

It is also worth noting that the optimal setting for calibration parameters is assumed to be unique. But it is possible in practice that this assumption is violated and there are multiple minima with multiple optimal settings for calibration parameters. We suggest two possible solutions to address this problem. First is to compare the optimized calibration parameters with the estimated calibration parameters from other biological systems that share a similar mechanism. In general, these calibration parameters are expected to be consistent and therefore they can be used to validate the calibration results. Another approach is to conduct additional computer simulations based on different optimal calibration settings, and then evaluate their performance by comparing the discrepancy between the simulation outputs and the physical outputs.


\vspace{5mm}
\if1\blind{\noindent \textbf{Acknowledgements}: The authors gratefully acknowledge helpful advice from the associate editor and the referee. This work was supported by NSF DMS 1660504 and 1660477. }
\fi

\bibliography{bib}
\begin{appendices}

\section{Asymptotic Results for Physical Experiment Modeling}\label{sec:asym_klr}

We start with a result developed by \cite{geer2000empirical} for general nonparametric regression (Lemma 11.4 and 11.5 in \cite{geer2000empirical}).
Denote
\begin{equation}\label{eq:truemodel}
\eta_0(\mathbf{x})=g(\xi_0(\mathbf{x})).
\end{equation}
Suppose $\mathcal{F}$ is the class of all regression functions equipped with the Sobolev norm $\|\cdot\|_{H^m(\Omega)}$, which is defined by 
\[
\|\xi\|^2_{H^m(\Omega)}=\|\xi\|^2_{L_2(\Omega)}+\sum^m_{i=1}\left\|\frac{\partial^i \xi}{\partial x^i} \right\|^2_{L_2(\Omega)}.
\]
Let 
\begin{equation*}
\hat{\xi}'_n:=\arg\max_{\xi\in\mathcal{F}}\frac{1}{n}\sum^n_{i=1}\left(y^p_i\log g(\xi(\mathbf{x}_i))+(1-y^p_i)\log(1-g(\xi(\mathbf{x}_i)))\right)+\lambda_n\|\xi\|^2_{H^m(\Omega)},
\end{equation*}
for some $\lambda_n>0$.
Then the convergence rate of $\hat{\xi}'_n$ is given in the following lemma.

\begin{lemma}\label{lemma:generalnp}
Let $\xi_0\in\mathcal{F}$. 
Assume that there exists some nonnegative $k_0$ and $k_1$ so that
\[
k^2_0\leq g(\xi_0(\mathbf{x}))\leq 1-k^2_0\quad\text{and} \quad |\partial g(z)/\partial z|\geq k_1>0\quad\text{for all}\quad |z-z_0|\leq k_1,
\]
where $z_0=\xi_0(\mathbf{x})$ and $\mathbf{x}\in\Omega$. 
For $\lambda^{-1}_n=O(n^{2m/(2m+1)})$, we have 
\[
\|\hat{\xi}'_n\|=O_p(1),\|g(\hat{\xi}'_n)-g(\xi_0)\|_{L_2(\Omega)}=O_p(\lambda^{1/2}_n),
\]
and 
\[
\|\hat{\xi}'_n-\xi_0\|_{L_2(\Omega)}=O_p(\lambda^{1/2}_n).
\]
\end{lemma}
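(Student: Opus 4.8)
The plan is to obtain Lemma~\ref{lemma:generalnp} by specializing the general theory of penalized $M$-estimation over Sobolev classes developed in Chapter~11 of \cite{geer2000empirical} (in particular Lemmas~11.4 and~11.5) to the binary-regression negative log-likelihood. Write $\gamma_\xi(y,\mathbf{x})=-y\log g(\xi(\mathbf{x}))-(1-y)\log(1-g(\xi(\mathbf{x})))$ for the per-observation loss, so that $\hat{\xi}'_n$ minimizes $P_n\gamma_\xi+\lambda_n\|\xi\|^2_{H^m(\Omega)}$, where $P_n$ is the empirical measure of $\{(y^p_i,\mathbf{x}_i)\}$; let $P$ denote the data/design distribution, which we assume (among the regularity conditions) has a density bounded away from $0$ and $\infty$ on $\Omega$, and for which $\xi_0$ is the population risk minimizer, since under \eqref{eq:physicalmodel} $\eta_0=g(\xi_0)$ is the true conditional mean and the logistic loss is a proper scoring rule. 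The starting point is the basic inequality $P_n\gamma_{\hat{\xi}'_n}+\lambda_n\|\hat{\xi}'_n\|^2_{H^m(\Omega)}\le P_n\gamma_{\xi_0}+\lambda_n\|\xi_0\|^2_{H^m(\Omega)}$, immediate from the definition of $\hat{\xi}'_n$.

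Next I would verify the three structural hypotheses that Lemma~11.4 of \cite{geer2000empirical} requires. First, the \emph{entropy condition}: by the classical Birman--Solomjak estimate, the unit ball of $H^m(\Omega)$ has metric entropy in $L_2(\Omega)$ (indeed in $L_\infty(\Omega)$, via the Sobolev embedding $H^m(\Omega)\hookrightarrow C(\Omega)$ valid for $m$ in the admissible range) of order $\varepsilon^{-1/m}$, which produces the entropy-driven rate $n^{-m/(2m+1)}$; the choice $\lambda^{-1}_n=O(n^{2m/(2m+1)})$ ensures $\lambda_n^{1/2}$ dominates this rate, so the overall rate is $\lambda_n^{1/2}$. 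Second, the \emph{margin (curvature) condition}: by the pointwise Pinsker inequality $\mathrm{KL}\big(\mathrm{Ber}(a)\,\|\,\mathrm{Ber}(b)\big)\ge 2(a-b)^2$, one gets $P(\gamma_\xi-\gamma_{\xi_0})=\int_\Omega \mathrm{KL}\big(\mathrm{Ber}(g(\xi_0(\mathbf{x})))\,\|\,\mathrm{Ber}(g(\xi(\mathbf{x})))\big)\,dP(\mathbf{x})\ge 2\|g(\xi)-g(\xi_0)\|^2_{L_2(\Omega)}$, a quadratic lower bound valid for all $\xi$. Third, the \emph{local Lipschitz property} of $\xi\mapsto\gamma_\xi$: on any set $\{\|\xi\|_\infty\le M\}$ the functions $\log g$ and $\log(1-g)$ have derivatives bounded in terms of $M$, while the hypothesis $k_0^2\le g(\xi_0)\le 1-k_0^2$ keeps $\xi_0$ and the loss at $\xi_0$ bounded, and $g$ itself is globally $\tfrac14$-Lipschitz; this supplies the increment bounds needed to control the empirical process $(P_n-P)(\gamma_\xi-\gamma_{\xi_0})$ over the localized, penalized class.

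With these in hand, Lemma~11.4 of \cite{geer2000empirical} yields simultaneously $\|g(\hat{\xi}'_n)-g(\xi_0)\|_{L_2(\Omega)}=O_p(\lambda_n^{1/2})$ (after the standard transition from the empirical to the population $L_2$ norm on the smooth class) and $\lambda_n\|\hat{\xi}'_n\|^2_{H^m(\Omega)}=O_p(\lambda_n)$, i.e.\ $\|\hat{\xi}'_n\|_{H^m(\Omega)}=O_p(1)$, which is the first conclusion. To pass from the probability scale to the $\xi$ scale I would invoke the conversion argument of Lemma~11.5 of \cite{geer2000empirical}: the $O_p(1)$ bound on $\|\hat{\xi}'_n\|_{H^m(\Omega)}$ gives, via Sobolev embedding, a uniform $O_p(1)$ bound on $\|\hat{\xi}'_n-\xi_0\|_\infty$; on the region $\{|\hat{\xi}'_n-\xi_0|\le k_1\}$ the hypothesis $|g'(z)|\ge k_1$ makes $|g(\hat{\xi}'_n)-g(\xi_0)|\ge k_1|\hat{\xi}'_n-\xi_0|$, while on the complement the contribution to $\|\hat{\xi}'_n-\xi_0\|^2_{L_2(\Omega)}$ is at most the squared sup-norm bound times the Lebesgue measure of that region, which is $O_p(\lambda_n)$ by Chebyshev applied to the already-established $L_2$ bound on $g(\hat{\xi}'_n)-g(\xi_0)$. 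Combining the two regions gives $\|\hat{\xi}'_n-\xi_0\|_{L_2(\Omega)}=O_p(\lambda_n^{1/2})$.

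The main obstacle is the coupling between the norm bound and the rate in the saturated regime: when a candidate $\xi$ in the penalized class is large, $g(\xi)$ sits near $0$ or $1$, the Lipschitz constant of $\log g$ degrades and the lower bound $|g'|\ge k_1$ fails, so one cannot a priori restrict attention to a bounded set. Resolving this needs the full force of van de Geer's localization-plus-penalty device --- the penalty $\lambda_n\|\xi\|^2_{H^m(\Omega)}$ must be shown to absorb the oscillation of the empirical process on the unbounded tail before the bounded-regime analysis above applies --- so the real work is in checking that the binary-regression loss fits the template of Lemmas~11.4 and~11.5, rather than in re-deriving the concentration estimates.
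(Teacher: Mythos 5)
Your proposal follows essentially the same route as the paper, which offers no independent proof of this lemma but simply quotes Lemmas 11.4 and 11.5 of \cite{geer2000empirical}; your sketch correctly reconstructs how those results apply (basic inequality, Sobolev-ball entropy giving the $n^{-m/(2m+1)}$ scale, a Pinsker-type quadratic lower bound on the excess risk, and the conversion from $\|g(\hat{\xi}'_n)-g(\xi_0)\|_{L_2(\Omega)}$ to $\|\hat{\xi}'_n-\xi_0\|_{L_2(\Omega)}$ via the $|\partial g/\partial z|\geq k_1$ condition together with the Sobolev-embedding sup-norm bound). So this is the paper's approach with the hypothesis-checking details filled in, and those details are sound.
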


In fact, the norms of some reproducing kernel Hilbert spaces (RKHS) are equivalent to Sobolev norms. For instance, the RKHS generated by the Mat{\'e}rn kernel function, given by
\begin{equation}\label{eq:matern}
\Phi(\mathbf{x},\mathbf{x}')=\frac{1}{\Gamma(\nu)2^{\nu-1}}\left(2\sqrt{\nu}\frac{\|\mathbf{x}-\mathbf{x}'\|}{\rho}\right)^\nu K_\nu\left(2\sqrt{\nu}\frac{\|\mathbf{x}-\mathbf{x}'\|}{\rho}\right),
\end{equation}
where $\nu\geq 1$ and $\rho\in\mathbb{R}_{+}$ are tuning parameters and $K_\nu$ is a Bessel function with parameter $\nu$, is equal to the (fractional) Sobolev space $H^{\nu+d/2}(\Omega)$, and the corresponding norms $\|\cdot\|_{\mathcal{N}_{\Phi}(\Omega)}$ and $\|\cdot\|_{H^{\nu+d/2}(\Omega)}$ are equivalent 
\citep{wendland2004scattered,tuo2016theoretical}. Therefore, as a consequence of Lemma \ref{lemma:generalnp}, we have the following proposition for $\hat{\xi}_n$ obtained by \eqref{eq:fitphysical}.

\begin{proposition}\label{prop:etahat}
Suppose that $\xi_0\in\mathcal{F}=\mathcal{N}_{\Phi}(\Omega)$, and $\mathcal{N}_{\Phi}(\Omega)$ can be embedded into $H^m(\Omega)$. 
Then, for $\lambda^{-1}_n=O(n^{2m/(2m+d)})$, the estimator $\hat{\xi}_n$ in \eqref{eq:fitphysical} and $\hat{\eta}_n=g(\hat{\xi}_n)$ satisfy
\begin{equation*}
\|\hat{\xi}_n\|_{\mathcal{N}_{\Phi}(\Omega)}=O_p(1),\quad\|\hat{\eta}_n-\eta_0\|_{L_2(\Omega)}=O_p(\lambda^{1/2}_n),
\end{equation*}
and
\[
\|\hat{\xi}_n-\xi_0\|_{L_2(\Omega)}=O_p(\lambda^{1/2}_n).
\]
\end{proposition}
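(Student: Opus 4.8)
The plan is to obtain Proposition~\ref{prop:etahat} from Lemma~\ref{lemma:generalnp}, the only genuine work being (a) that \eqref{eq:fitphysical} penalizes by the RKHS norm $\|\cdot\|_{\mathcal{N}_{\Phi}(\Omega)}$ rather than the Sobolev norm $\|\cdot\|_{H^m(\Omega)}$ used in the lemma, and (b) that the lemma, as quoted, is effectively one-dimensional ($\lambda_n^{-1}=O(n^{2m/(2m+1)})$) while the proposition is $d$-dimensional ($\lambda_n^{-1}=O(n^{2m/(2m+d)})$). First I would check the structural hypotheses of Lemma~\ref{lemma:generalnp}. Since $\xi_0\in\mathcal{N}_{\Phi}(\Omega)$ and $\Phi$ is a continuous positive definite kernel on the compact set $\Omega$, $\xi_0$ is continuous, hence bounded; with $g$ the logistic function this gives $g(\xi_0(\mathbf{x}))\in[k_0^2,\,1-k_0^2]$ for some $k_0>0$, and $|\partial g(z)/\partial z|=g(z)(1-g(z))\ge k_1>0$ on a neighborhood of the range of $\xi_0$. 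Thus the conditions under which the proof of Lemma~\ref{lemma:generalnp} (the van~de~Geer bookkeeping of \cite{geer2000empirical}, Lemmas~11.4--11.5) operates are met.

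Next I would transfer that argument to the RKHS-penalized estimator. The rate in \cite{geer2000empirical} is controlled by the $L_2(\Omega)$-metric entropy of balls of the penalty functional together with a bound on the roughness of low-penalty functions, and both pass through the continuous embedding $\mathcal{N}_{\Phi}(\Omega)\hookrightarrow H^m(\Omega)$, i.e. $\|\xi\|_{H^m(\Omega)}\le C\|\xi\|_{\mathcal{N}_{\Phi}(\Omega)}$. In particular $\{\xi:\|\xi\|_{\mathcal{N}_{\Phi}(\Omega)}\le R\}\subseteq\{\xi:\|\xi\|_{H^m(\Omega)}\le CR\}$, so the classical entropy estimate for Sobolev balls yields $\log N(\delta,\{\|\xi\|_{\mathcal{N}_{\Phi}(\Omega)}\le 1\},\|\cdot\|_{L_2(\Omega)})=O(\delta^{-d/m})$. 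Plugging this entropy bound into the proof of Lemma~\ref{lemma:generalnp} with $\|\cdot\|_{\mathcal{N}_{\Phi}(\Omega)}$ in place of $\|\cdot\|_{H^m(\Omega)}$, the balance between the modulus of continuity of the log-likelihood empirical process and the penalty is attained exactly at $\lambda_n^{-1}=O(n^{2m/(2m+d)})$, which is also the source of the exponent in the statement. This gives $\|\hat{\xi}_n\|_{\mathcal{N}_{\Phi}(\Omega)}=O_p(1)$ and $\|g(\hat{\xi}_n)-g(\xi_0)\|_{L_2(\Omega)}=O_p(\lambda_n^{1/2})$, i.e. $\|\hat{\eta}_n-\eta_0\|_{L_2(\Omega)}=O_p(\lambda_n^{1/2})$. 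For the Mat\'ern kernel \eqref{eq:matern} one may shortcut: there $\|\cdot\|_{\mathcal{N}_{\Phi}(\Omega)}$ and $\|\cdot\|_{H^{\nu+d/2}(\Omega)}$ are equivalent, so the $d$-dimensional analogue of Lemma~\ref{lemma:generalnp} with $m=\nu+d/2$ applies directly, the equivalence constants being absorbed into the $O_p$.

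Finally I would deduce the bound on $\|\hat{\xi}_n-\xi_0\|_{L_2(\Omega)}$. From $\|\hat{\xi}_n\|_{\mathcal{N}_{\Phi}(\Omega)}=O_p(1)$ and the embeddings $\mathcal{N}_{\Phi}(\Omega)\hookrightarrow H^m(\Omega)\hookrightarrow C(\Omega)$ (the last valid when $m>d/2$, in particular for the Mat\'ern kernel with $\nu\ge 1$) we get $\|\hat{\xi}_n\|_{\infty}=O_p(1)$; hence, on an event of probability tending to one, $\hat{\xi}_n(\mathbf{x})$ stays in the region where $|\partial g/\partial z|\ge k_1$, so a mean value argument gives $|\hat{\xi}_n(\mathbf{x})-\xi_0(\mathbf{x})|\le k_1^{-1}|g(\hat{\xi}_n(\mathbf{x}))-g(\xi_0(\mathbf{x}))|$ pointwise, and integrating over $\Omega$ yields $\|\hat{\xi}_n-\xi_0\|_{L_2(\Omega)}=O_p(\lambda_n^{1/2})$. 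The main obstacle is the middle step: one cannot literally cite Lemma~\ref{lemma:generalnp} because $\hat{\xi}_n$ and $\hat{\xi}'_n$ minimize different penalized criteria over different function classes and in a different dimension, so one must verify that van~de~Geer's entropy-versus-penalty calculus is insensitive to replacing the Sobolev norm by a dominating (for the Mat\'ern kernel, equivalent) RKHS norm and to the $d$-dimensional entropy exponent $d/m$.
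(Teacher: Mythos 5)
Your proposal is correct and follows essentially the same route as the paper, which obtains Proposition~\ref{prop:etahat} directly from Lemma~\ref{lemma:generalnp} via the (equivalence or continuous) embedding of $\mathcal{N}_{\Phi}(\Omega)$ into $H^{m}(\Omega)$, as in the Mat\'ern case. You additionally spell out details the paper leaves implicit—verification of the hypotheses on $g(\xi_0)$, the $d$-dimensional entropy exponent replacing the one-dimensional rate $n^{2m/(2m+1)}$, and the mean-value argument recovering $\|\hat{\xi}_n-\xi_0\|_{L_2(\Omega)}=O_p(\lambda_n^{1/2})$—all of which are consistent with the intended argument.
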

Proposition \ref{prop:etahat} suggests that one may choose $\lambda_n\asymp n^{-2m/(2m+d)}$ to obtain the best convergence rate $\|\hat{\eta}_n-\eta_0\|_{L_2(\Omega)}=O_p(n^{-m/(2m+d)})$, where $a_n\asymp b_n$ denotes that the two positive sequences $a_n$ and $b_n$ have the same order of magnitude.

\section{Proof of Theorem \ref{thm:calibration}}\label{sec:proofconsistent}

\begin{proof}
The proof of (i) is developed along the lines described in Theorem 1 of \cite{tuo2015efficient} and under the regularity assumptions A1-A4, B1-B5, C1-C2 in Supplemental Material \ref{sec:assumption}. We first prove the consistency, $\hat{\theta}_n\xrightarrow{p} \theta^*$. It suffices to prove that $\|\hat{\eta}_n(\cdot)-\hat{p}_N(\cdot,\theta)\|_{L_2(\Omega)}$ converges to $\|\eta(\cdot)-p(\cdot,\theta)\|_{L_2(\Omega)}$ uniformly with respect to $\theta\in\Theta$ in probability, which is ensured by
\begin{align}\label{eq:convergenceproof}
&\|\hat{\eta}_n(\cdot)-\hat{p}_N(\cdot,\theta)\|^2_{L_2(\Omega)}-\|\eta(\cdot)-p(\cdot,\theta)\|^2_{L_2(\Omega)}\\
=&\int_{\Omega}\left(\hat{\eta}_n(z)-\eta(z)-\hat{p}_N(z,\theta)+p(z,\theta) \right)\left(\hat{\eta}_n(z)+\eta(z)-\hat{p}_N(z,\theta)-p(z,\theta) \right)dz\nonumber\\
\leq &\left(\|\hat{\eta}_n-\eta \|_{L_2(\Omega)}+\|\hat{p}_N(\cdot,\theta)-p(\cdot,\theta)\|_{L_2(\Omega)} \right)\left( \|\hat{\eta}_n(\cdot)\|_{L_2(\Omega)}+\|\eta(\cdot)\|_{L_2(\Omega)}+\|\hat{p}_N(\cdot,\theta)\|_{L_2(\Omega)}+\|p(\cdot,\theta)\|_{L_2(\Omega)}\right),\nonumber
\end{align}
where the inequality follow from the Schwarz inequality and the triangle inequality. Denote the volume of $\Omega$ by $Vol(\Omega)$. It can be shown that
\[
\|f\|_{L_2(\Omega)}\leq Vol(\Omega)\|f\|_{L_{\infty}(\Omega)}
\]
holds for all $f\in {L_{\infty}(\Omega)}$. Thus, we have 
\begin{align}\label{eq:phat}
\|\hat{p}_N(\cdot,\theta)-p(\cdot,\theta)\|_{L_2(\Omega)}&\leq Vol(\Omega)\|\hat{p}_N(\cdot,\theta)-p(\cdot,\theta)\|_{L_{\infty}(\Omega)}\nonumber \\
&\leq Vol(\Omega)\|\hat{p}_N-p\|_{L_{\infty}(\Omega\times\Theta)},
\end{align}
and $\|f(\cdot)\|_{L_2(\Omega)}\leq Vol(\Omega)$ for $f(\cdot)=\hat{\eta}(\cdot),\eta(\cdot),\hat{p}_N(\cdot,\theta)$, and $p(\cdot,\theta)$ because $\|f(\cdot)\|_{L_{\infty}(\Omega)}\leq 1$. Then, combining \eqref{eq:phat} and assumptions B2 and C1, we have that \eqref{eq:convergenceproof} converges to 0 uniformly with respect to $\theta\in\Theta$, which proves the consistency of $\hat{\theta}_n$.

Since $\hat{\theta}_n$ minimizes \eqref{eq:calibration}, by invoking assumptions A1,A2 and A4, we have 
\begin{align*}
0&=\frac{\partial}{\partial\theta}\|\hat{\eta}_n(\cdot)-\hat{p}_N(\cdot,\hat{\theta}_n)\|^2_{L_2(\Omega)}\\
&=2\int_{\Omega}\left(\hat{\eta}_n(z)-\hat{p}_N(z,\hat{\theta}_n)\right)\frac{\partial\hat{p}_n}{\partial\theta}(z,\hat{\theta}_n)dz,
\end{align*}
and by assumption B2, C1 and C2, it implies 
\begin{equation}\label{eq:temp1}
\int_{\Omega}\left(\hat{\eta}_n(z)-p(z,\hat{\theta}_n)\right)\frac{\partial p}{\partial\theta}(z,\hat{\theta}_n)dz=o_p(n^{-1/2}).
\end{equation}
Let $l(\xi)=\frac{1}{n}\sum^n_{i=1}\left(y^p_i\log g(\xi(\mathbf{x}_i))+(1-y^p_i)\log(1-g(\xi(\mathbf{x}_i)))\right)+\lambda_n\|\xi\|^2_{\mathcal{N}_{\Phi}(\Omega)}$. From \eqref{eq:fitphysical}, we know that $\hat{\xi}_n$ maximizes $l$ over $\mathcal{N}_{\Phi}(\Omega)$. Since $\hat{\theta}_n\xrightarrow{p} \theta^*$ and by assumption A4, $\frac{\partial p}{\partial\theta}(\cdot,\hat{\theta}_n)\in\mathcal{N}_{\Phi}(\Omega)$ with sufficiently large $n$. 
Define $h(z)=\frac{f(z)}{g(z)(1-g(z))}$ and write $\hat{h}_n=h(\hat{\xi}_n)$. Since $h(z)=1$ for any $z\in\mathbb{R}$ when $g$ is a logit function, we have 
\begin{align}\label{eq:sumCDE}
0&=\frac{\partial}{\partial t}l(\hat{\xi}_n(\cdot)+t\frac{\partial p}{\partial\theta_j}(\cdot,\hat{\theta}_n))|_{t=0}\nonumber\\
&=-\frac{1}{n}\sum^n_{i=1}[g(\hat{\xi}_n(\mathbf{x}_i))-g(\xi(\mathbf{x}_i))]\hat{h}_n(\mathbf{x}_i)\frac{\partial p}{\partial\theta_j}(\mathbf{x}_i,\hat{\theta}_n)+\frac{1}{n}\sum^n_{i=1}(Y_i-g(\xi(\mathbf{x}_i)))\hat{h}_n(\mathbf{x}_i)\frac{\partial p}{\partial\theta_j}(\mathbf{x}_i,\hat{\theta}_n)\nonumber\\
&\quad\quad\quad+2\lambda_n<\hat{\xi}_n,\frac{\partial p}{\partial\theta_j}(\mathbf{x}_i,\hat{\theta}_n)>_{\mathcal{N}_{\Phi}(\Omega)}\nonumber\\
&=-\frac{1}{n}\sum^n_{i=1}[g(\hat{\xi}_n(\mathbf{x}_i))-g(\xi(\mathbf{x}_i))]\frac{\partial p}{\partial\theta_j}(\mathbf{x}_i,\hat{\theta}_n)+\frac{1}{n}\sum^n_{i=1}(Y_i-\eta(\mathbf{x}_i))\frac{\partial p}{\partial\theta_j}(\mathbf{x}_i,\hat{\theta}_n)\nonumber\\
&\quad\quad\quad+2\lambda_n<\hat{\xi}_n,\frac{\partial p}{\partial\theta_j}(\mathbf{x}_i,\hat{\theta}_n)>_{\mathcal{N}_{\Phi}(\Omega)}\nonumber\\
&:=C_n+D_n+E_n.
\end{align}

We first consider $C_n$. Let $A_i(f,\theta)=[g(f(\mathbf{x}_i))-g(\xi(\mathbf{x}_i))]\frac{\partial p}{\partial\theta_j}(\mathbf{x}_i,\theta)$ for $(f,\theta)\in\mathcal{N}_{\Phi}(\Omega,\rho)\times \Theta$ for some $\rho>0$. Define the empirical process
\[
E_{1n}(f,\theta)=\frac{1}{\sqrt{n}}\sum^n_{i=1}\left\{A_i(f,\theta)-\mathbb{E}[A_i(f,\theta)]\right\},
\]
where $\mathbb{E}[A_i(f,\theta)]=\int_{\Omega}[g(f(\mathbf{z}))-g(\xi(\mathbf{z}))]\frac{\partial p}{\partial\theta_j}(\mathbf{z},\theta)d\mathbf{z}$.
By assumption B1, $\mathcal{N}_{\Phi}(\Omega,\rho)$ is Donsker. Thus, by Theorem 2.10.6 in \cite{van1996}, $\mathcal{F}_1=\{g(f)-g(\xi):f\in\mathcal{N}_{\Phi}(\Omega,\rho)\}$ is also Donsker because $g$ is a Lipschitz functions. By assumption A4, the class $\mathcal{F}_2=\{\frac{\partial p}{\partial\theta_j}(\cdot,\hat{\theta}_n),\theta\in U\}$ is Donsker. Since both $\mathcal{F}_1$ and $\mathcal{F}_2$ are uniformly bounded, by Example 2.10.8 in \cite{van1996} the product class $\mathcal{F}_1\times\mathcal{F}_2$ is also Donsker. Thus, the asymptotic equicontinuity property holds, which implies that for any $\epsilon>0$ there exists a $\delta>0$ such that 
\[
\limsup_{n\rightarrow\infty}Pr\left(\sup_{\zeta\in\mathcal{F}_1\times \mathcal{F}_2,\|\zeta\|\leq\delta}\left|\frac{1}{\sqrt{n}}\sum^n_{i=1}(\zeta(\mathbf{x}_i)-\mathbb{E}(\zeta(\mathbf{x}_i))) \right|>\epsilon \right)<\epsilon,
\]
where $\|\zeta\|^2:=\mathbb{E}[\zeta(\mathbf{x}_i)^2]$. See Theorem 2.4 of \cite{mammen1997penalized}. This implies that for any $\epsilon>0$ there exists a $\delta>0$ such that 
\begin{equation}\label{eq:equicontinuity}
\limsup_{n\rightarrow\infty}Pr\left(\sup_{f\in\mathcal{N}_{\Phi}(\Omega,\rho),\theta\in U,\|g(f)-g(\xi)\|_{L_2(\Omega)}\leq\delta}\left|E_{1n}(f,\theta) \right|>\epsilon \right)<\epsilon.
\end{equation}
Suppose $\varepsilon>0$ is a fixed value. Assumption B3 implies that there exists $\rho_0>0$ such that $Pr(\|\hat{\xi}\|_{\mathcal{N}_{\Phi}}>\rho_0)\leq\varepsilon/3$. In addition, choose $\delta_0$ to be a possible value of $\delta$ which satisfies \eqref{eq:equicontinuity} with $\epsilon=\varepsilon/3$ and $\rho=\rho_0$. Assumption B2 implies that $Pr(\|g(\hat{\xi}_n)-g(\xi)\|_{L_2(\Omega)}>\delta_0)<\varepsilon/3$. Define 
\[
\hat{\xi}^{\circ}_n=\begin{cases}
   \hat{\xi}_n, & \text{if }\|\hat{\xi}_n\|_{\mathcal{N}_{\Phi}(\Omega)}\leq\rho_0\text{ and } \|g(\hat{\xi}_n)-g(\xi)\|_{L_2(\Omega)}\leq\delta_0,\\
    \xi, & \text{otherwise}.
  \end{cases}
\]
Then, for sufficiently large $n$, we have 
\begin{align*}
Pr(|E_{1n}(\hat{\xi}_n,\hat{\theta}_n)|>\varepsilon)&\leq Pr(|E_{1n}(\hat{\xi}^{\circ}_n,\hat{\theta}_n)|>\varepsilon)+Pr(\|\hat{\xi}_n\|_{\mathcal{N}_{\Phi}(\Omega)}>\rho_0)+Pr(\|g(\hat{\xi}_n)-g(\xi)\|_{L_2(\Omega)}>\delta_0)\\
&\leq Pr(|E_{1n}(\hat{\xi}^{\circ}_n,\hat{\theta}_n)|>\varepsilon/3)+\varepsilon/3+\varepsilon/3\\
&\leq Pr\left(\sup_{f\in\mathcal{N}_{\Phi}(\Omega,\rho),\theta\in U,\|g(f)-g(\xi)\|_{L_2(\Omega)}\leq\delta}\left|E_{1n}(f,\theta) \right|>\varepsilon/3 \right)+\varepsilon/3+\varepsilon/3\\
&\leq\varepsilon.
\end{align*}
The first and third inequalities follow from the definition of $\hat{\xi}^{\circ}_n$, and the last inequality follows from \eqref{eq:equicontinuity}. Thus, this implies that $E_{1n}(\hat{\xi}_n,\theta)$ tends to zero in probability, which gives 
\begin{align*}
o_p(1)&=E_{1n}(\hat{\xi}_n,\hat{\theta}_n)\\
&=\frac{1}{\sqrt{n}}\sum^n_{i=1}\left\{[g(\hat{\xi}_n(\mathbf{x}_i))-g(\xi(\mathbf{x}_i))]\frac{\partial p}{\partial\theta_j}(\mathbf{x}_i,\hat{\theta}_n)\right\}-\frac{1}{\sqrt{n}}\int_{\Omega}[g(\hat{\xi}_n(\mathbf{z}))-g(\xi(\mathbf{z}))]\frac{\partial p}{\partial\theta_j}(\mathbf{z},\hat{\theta}_n)d\mathbf{z}\\
&=-\sqrt{n}C_n-\sqrt{n}\int_{\Omega}[g(\hat{\xi}_n(\mathbf{z}))-g(\xi(\mathbf{z}))]\frac{\partial p}{\partial\theta_j}(\mathbf{z},\hat{\theta}_n)d\mathbf{z},
\end{align*}
which implies
\begin{align}\label{eq:Cn}
C_n
=&-\int_{\Omega}[g(\hat{\xi}_n(\mathbf{z}))-g(\xi(\mathbf{z}))]\frac{\partial p}{\partial\theta_j}(\mathbf{z},\hat{\theta}_n)d\mathbf{z}+o_p(n^{-1/2})\nonumber\\
=&-\int_{\Omega}[\hat{\eta}_n(\mathbf{z})-\eta(\mathbf{z})]\frac{\partial p}{\partial\theta_j}(\mathbf{z},\hat{\theta}_n)d\mathbf{z}+o_p(n^{-1/2}).
\end{align}
Then, by substituting \eqref{eq:temp1} to \eqref{eq:Cn} and using assumption A2, Taylor expansion can be applied to \eqref{eq:Cn} at $\theta^*$, which leads to 
\begin{align*}
C_n=&-\int_{\Omega}[p(\mathbf{z},\hat{\theta}_n)-\eta(\mathbf{z})]\frac{\partial p}{\partial\theta_j}(\mathbf{z},\hat{\theta}_n)dz+o_p(n^{-1/2})\nonumber\\
=&-\left(\frac{1}{2}\int_{\Omega}\frac{\partial^2}{\partial\theta_i\partial\theta_j}[p(\mathbf{z},\tilde{\theta}_n)-\eta(\mathbf{z})]^2d\mathbf{z}\right)(\hat{\theta}_n-\theta^*)+o_p(n^{-1/2}),
\end{align*}
where $\tilde{\theta}_n$ lies between $\hat{\theta}_n$ and $\theta^*$. By the consistency of $\hat{\theta}_n$, we then have $\tilde{\theta}_n\xrightarrow{p} \theta^*$, which implies that 
\begin{equation*}
\int_{\Omega}\frac{\partial^2}{\partial\theta\partial\theta^T}[p(\mathbf{z},\tilde{\theta}_n)-\eta(\mathbf{z})]^2d\mathbf{z}\xrightarrow{p}\int_{\Omega}\frac{\partial^2}{\partial\theta\partial\theta^T}[p(\mathbf{z},\theta^*)-\eta(\mathbf{z})]^2d\mathbf{z}=V.
\end{equation*}
Thus, we have
\begin{equation}\label{eq:finalCn}
C_n=-\frac{1}{2}V(\hat{\theta}_n-\theta^*)+o_p(n^{-1/2}).
\end{equation}

Next, we consider $D_n$. Define the empirical process
\begin{align*}
E_{2n}(\theta)&=\frac{1}{\sqrt{n}}\sum^n_{i=1}\left\{e_i\frac{\partial p}{\partial\theta_j}(\mathbf{x}_i,\theta)-e_i\frac{\partial p}{\partial\theta_j}(\mathbf{x}_i,\theta^*)-\mathbb{E}\left[e_i\frac{\partial p}{\partial\theta_j}(\mathbf{x}_i,\theta)-e_i\frac{\partial p}{\partial\theta_j}(\mathbf{x}_i,\theta^*)\right]\right\}\\
&=\frac{1}{\sqrt{n}}\sum^n_{i=1}\left\{e_i\frac{\partial p}{\partial\theta_j}(\mathbf{x}_i,\theta)-e_i\frac{\partial p}{\partial\theta_j}(\mathbf{x}_i,\theta^*)\right\},
\end{align*}
where $\theta\in U$. Assumption A1 implies that the set $\{\zeta_{\theta}\in C(\mathbb{R}\times\Omega):\zeta_{\theta}(e,\mathbf{x})=e \frac{\partial p}{\partial\theta_j}(\mathbf{x},\theta)-e \frac{\partial p}{\partial\theta_j}(\mathbf{x},\theta^*),\theta\in U\}$ is a Donsker class, which ensures that $E_{2n}(\cdot)$ converges weakly in $L_{\infty}(U)$ to a tight Gaussian process, denoted by $G(\cdot)$. Without loss of generality, we assume $G(\cdot)$ has continuous sample paths. Then, by the continuous mapping theorem \citep{van2000asymptotic} and the consistency of $\hat{\theta}_n$, we have $E_{2n}(\theta)\xrightarrow{p}G(\theta^*)$. Because $E_{2n}(\theta^*)=0$ for all $n$, we have $G(\theta^*)=0$. Then, we have $E_{2n}(\theta)\xrightarrow{p}0$, which gives 
\begin{equation}\label{eq:Dn}
D_n=\frac{1}{n}\sum^n_{i=1}(Y^p_i-\eta(\mathbf{x}_i))\frac{\partial p}{\partial\theta_j}(\mathbf{x}_i,\theta^*)+o_p(n^{-1/2}).
\end{equation}

Lastly, we consider $E_n$. Applying assumption A4, B3, B4, we have
\begin{equation}\label{eq:En}
E_n\leq 2\lambda_n\|\hat{\xi}_n\|_{\mathcal{N}_{\Phi}(\Omega)}\left\|\frac{\partial p}{\partial \theta_j}(\cdot,\hat{\theta}_n) \right\|_{\mathcal{N}_{\Phi}(\Omega)}=o_p(n^{-1/2}).
\end{equation}

By combining \eqref{eq:sumCDE}, \eqref{eq:finalCn}, \eqref{eq:Dn} and \eqref{eq:En}, we have 
\[
\hat{\theta}_n-\theta^*=2V^{-1}\left\{\frac{1}{n}\sum^n_{i=1}(Y^p_i-\eta(\mathbf{x}_i))\frac{\partial p}{\partial\theta}(\mathbf{x}_i,\theta^*)\right\}+o_p(n^{-1/2}).
\]
\end{proof}


\end{appendices}

\def\spacingset#1{\renewcommand{\baselinestretch}%
{#1}\small\normalsize} \spacingset{1.3}

\newpage
\setcounter{page}{1}
\bigskip
\bigskip
\bigskip
\begin{center}
{\Large\bf Supplementary Materials for ``Calibration for Computer Experiments with Binary Responses and Application to Cell Adhesion Study''}
\end{center}
\medskip

\setcounter{section}{0}
\setcounter{equation}{0}
\def\theequation{S\arabic{section}.\arabic{equation}}
\def\thesection{S\arabic{section}}

\section{Assumptions}\label{sec:assumption}

The regularity conditions on the models are given below. For any $\theta\in\Theta\subset\mathbb{R}^q$, write $\theta=(\theta_1,\ldots,\theta_q)$. Denote $e_i=y^p_i-\eta(\mathbf{x}_i)$.

\begin{enumerate}[label=A\arabic*:]
\item The sequences $\{\mathbf{x}_i\}$ and $\{e_i\}$ are independent; $\mathbf{x}_i$'s are i.i.d. from a uniform distribution over $\Omega$; and $\{e_i\}$ is a sequence of i.i.d. random variables with zero mean and finite variance.
\item $\theta^*$ is the unique solution to \eqref{L2part1} and is an interior point of $\Theta$.
\item $V:=\mathbb{E}\left[\frac{\partial^2}{\partial\theta\partial\theta^T}( \eta(X)-p(X,\theta^*))^2\right]$ is invertible.
\item There exists a neighborhood $U\subset \Theta$ of $\theta^*$ such that 
\[
\sup_{\theta\in U}\left\|\frac{\partial p}{\partial \theta_i}(\cdot,\theta)\right\|_{\mathcal{N}_{\Phi}(\Omega)}<+\infty, \quad\frac{\partial^2 p}{\partial \theta_i\partial \theta_j}(\cdot,\cdot)\in C(\Omega\times U),
\]
for all $\theta\in U$ and all $i,j=1,\ldots,q$.
\end{enumerate}

   Assumptions B1-B4 are related to the nonparametric models and Assumptions C1 and C2 are related to the emulators.

\begin{enumerate}[label=B\arabic*:]
\item $\xi\in\mathcal{N}_{\Phi}(\Omega)$ and $\mathcal{N}_{\Phi}(\Omega,\rho)$ is Donsker for all $\rho>0$.
\item $\|\hat{\eta}_n-\eta\|_{L_2(\Omega)}=o_p(1).$
\item $\|\hat{\xi}_n\|_{\mathcal{N}_{\Phi}(\Omega)}=O_p(1).$
\item $\lambda_n=o_p(n^{-1/2})$.
\end{enumerate}
\begin{enumerate}[label=C\arabic*:]
\item $\|\hat{p}_N-p\|_{L_{\infty}(\Omega\times\Theta)}=o_p(N^{-1/2})$.
\item $\|\frac{\partial\hat{p}_N}{\partial\theta_i}-\frac{\partial p}{\partial\theta_i}\|_{L_{\infty}(\Omega\times\Theta)}=o_p(N^{-1/2})$ for $i=1,\ldots,q$.
\end{enumerate} 

The Donsker property is an important concept in the theoretical studies of empirical processes. The definition and detailed discussion are referred to \cite{suppvan1996} and \cite{suppkosorok2008}. \cite{supptuo2015efficient} showed that if the conditions of Proposition \ref{prop:etahat} hold and $m>d/2$, then $\mathcal{N}_{\Phi}(\Omega,\rho)$ is a Donsker. 
The authors also mentioned that under the assumption A1 and $\mathbb{E}[\exp\{C|Y^p_i-\eta(\mathbf{x}_i)|\}]<+\infty$ for some $C>0$, the conditions of Proposition \ref{prop:etahat} are satisfied. Therefore, by choosing a suitable sequence of $\lambda_n$, say $\lambda_n\asymp n^{-2m/(2m+d)}$, one can show that condition B4 holds and B2 and B3 are ensured by Proposition \ref{prop:etahat}.
Assumptions C1 and C2 assume that the   
approximation error caused by emulation in computer experiments is negligible compared to the estimation error caused by the error in physical experiments. Given the fact that the cost for computer experiments is usually cheaper than physical experiments, this assumption is reasonable because the sample size of computer experiments is usually larger than that of physical experiments (i.e., $N>n$). 
Under some regularity conditions, the emulators constructed by the existing methods, such as \cite{suppwilliams1998bayesian}, \cite{suppnickisch2008approximations}, and \cite{suppsung2017generalized},  satisfy the assumptions and can be applied in this framework. 


\section{Theorem and Proof of Semiparametric Efficiency}\label{sec:proofsemiparametric}

\begin{theorem}
Under the Assumptions in Supplemental Material \ref{sec:assumption}, the $L_2$ calibration method \eqref{eq:calibration} is semiparametric efficient.
\end{theorem}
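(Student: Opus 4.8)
The plan is to compute the efficient influence function for estimating the parameter $\theta^*=\theta^*(\xi_0)$ defined by \eqref{L2part1}, regarded as a smooth functional of the unknown regression function, and then to recognize that it coincides with the influence function of $\hat\theta_n$ displayed in the expansion \eqref{eq:L2thm} of Theorem \ref{thm:calibration}(i). First I would set up the semiparametric model for a single observation $(\mathbf{x},y^p)$: the covariate $\mathbf{x}$ follows the \emph{known} uniform law on $\Omega$, and $y^p\mid\mathbf{x}\sim Ber(g(\xi_0(\mathbf{x})))$ with infinite-dimensional nuisance $\xi_0\in\mathcal{N}_{\Phi}(\Omega)$. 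Along a regular one-parameter submodel $\xi_t=\xi_0+ta+o(t)$, with direction $a$ ranging over a dense subset of $L_2(\Omega)$, the logistic identity $g'=g(1-g)$ yields the score $(y^p-\eta_0(\mathbf{x}))a(\mathbf{x})$, where $\eta_0=g(\xi_0)$. Hence the tangent space is $\mathcal{T}=\{(\mathbf{x},y)\mapsto(y-\eta_0(\mathbf{x}))a(\mathbf{x}):a\in L_2(\Omega)\}$; since $\xi_0$ is continuous on the compact set $\Omega$, $\eta_0$ is bounded away from $0$ and $1$, so $\mathcal{T}$ is already closed in $L_2$ of the data law.

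Next I would differentiate the first-order condition $\mathbb{E}[(g(\xi_t(X))-p(X,\theta^*(\xi_t)))\frac{\partial p}{\partial\theta}(X,\theta^*(\xi_t))]=0$ at $t=0$. Writing $\dot\theta_0:=\frac{d}{dt}\theta^*(\xi_t)|_{t=0}$ and using $\frac{d}{dt}g(\xi_t)|_{t=0}=\eta_0(1-\eta_0)a$ together with $\frac{\partial^2}{\partial\theta\partial\theta^T}(\eta_0-p)^2=2\frac{\partial p}{\partial\theta}\frac{\partial p}{\partial\theta^T}-2(\eta_0-p)\frac{\partial^2 p}{\partial\theta\partial\theta^T}$, one finds that the matrix multiplying $\dot\theta_0$ is exactly $V/2$, so
\[
\dot\theta_0=2V^{-1}\,\mathbb{E}\!\left[\eta_0(X)(1-\eta_0(X))\,a(X)\,\frac{\partial p}{\partial\theta}(X,\theta^*)\right].
\]
Differentiability of $t\mapsto\theta^*(\xi_t)$ follows from the implicit function theorem applied to that first-order condition, invoking the non-degeneracy in Assumption A3 and the smoothness and boundedness in Assumption A4.

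I would then propose the candidate gradient $\widetilde\ell(\mathbf{x},y)=2V^{-1}(y-\eta_0(\mathbf{x}))\frac{\partial p}{\partial\theta}(\mathbf{x},\theta^*)$, which is precisely the influence function in \eqref{eq:L2thm}, and check two points: (a) $\widetilde\ell$ lies in $\mathcal{T}$ componentwise, which is immediate; and (b) $\widetilde\ell$ represents $\dot\theta_0$, that is $\dot\theta_0=\mathbb{E}[\widetilde\ell(X,Y)(Y-\eta_0(X))a(X)]$ for every direction $a$ — and this holds because $\mathbb{E}[(Y-\eta_0(X))^2\mid X]=\eta_0(X)(1-\eta_0(X))$, which turns the right-hand side into the displayed formula for $\dot\theta_0$. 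By (a) and (b), $\widetilde\ell$ is the canonical gradient, so the semiparametric efficiency bound for estimating $\theta^*$ is $\mathbb{E}[\widetilde\ell\widetilde\ell^T]=4V^{-1}WV^{-1}$ with $W$ as in \eqref{eq:W}. Finally, Theorem \ref{thm:calibration}(i) shows that $\hat\theta_n$ is asymptotically linear with influence function $\widetilde\ell\in\mathcal{T}$; by the standard characterization that a regular asymptotically linear estimator is efficient if and only if its influence function lies in the tangent space, $\hat\theta_n$ is regular and attains the bound, hence is semiparametric efficient.

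The main obstacle I expect is the analytic bookkeeping behind the two differentiation steps: verifying that $t\mapsto\theta^*(\xi_t)$ is differentiable along regular submodels (implicit function theorem together with Assumptions A2--A4), confirming that the chosen submodels generate a nuisance tangent set dense in $\{(y-\eta_0(\mathbf{x}))a(\mathbf{x}):a\in L_2(\Omega)\}$, and recording the regularity of $\hat\theta_n$, which is essentially automatic once \eqref{eq:L2thm} and $\widetilde\ell\in\mathcal{T}$ are in hand but should be stated explicitly. The remaining algebra — collapsing the derivative of the first-order condition into $V$ and the residual second moment into $W$ — is routine once the logistic relation $g'=g(1-g)$ and the Bernoulli conditional variance $\eta_0(1-\eta_0)$ are used.
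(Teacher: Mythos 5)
Your argument is correct, but it takes a genuinely different route from the paper. You derive the efficient influence function from first principles: set up the tangent space $\{(y-\eta_0(\mathbf{x}))a(\mathbf{x})\}$ from one-dimensional nuisance submodels, differentiate the first-order condition of \eqref{L2part1} to get the pathwise derivative $\dot\theta_0=2V^{-1}\mathbb{E}[\eta_0(1-\eta_0)a\,\partial p/\partial\theta]$ (your sign/algebra collapsing the derivative into $V/2$ and the residual variance into $W$ checks out), identify the canonical gradient $2V^{-1}(y-\eta_0(\mathbf{x}))\frac{\partial p}{\partial\theta}(\mathbf{x},\theta^*)$, and then invoke the standard characterization that a regular asymptotically linear estimator whose influence function equals the canonical gradient attains the bound $4V^{-1}WV^{-1}$ — which is exactly the influence function delivered by Theorem \ref{thm:calibration}(i). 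The paper instead follows Tuo and Wu (2015): it exhibits the specific $q$-dimensional submodel $\xi_\gamma=\xi+\gamma^T\frac{\partial p}{\partial\theta}(\cdot,\theta^*)$, which turns \eqref{eq:physicalmodel} into an ordinary logistic regression, writes down the asymptotic expansion of the ML estimator of $\gamma$ and of the induced plug-in estimator of $\theta^*$, and concludes efficiency because that expansion coincides with \eqref{eq:L2thm}; in effect the paper works with the least favorable parametric submodel and matches expansions, while you prove directly that $4V^{-1}WV^{-1}$ is the semiparametric bound and that this particular direction is least favorable. Your route is more self-contained and makes the efficiency bound explicit, at the price of the extra bookkeeping you already flag (implicit-function-theorem differentiability of $t\mapsto\theta^*(\xi_t)$ under A2--A4, denseness of the RKHS directions in $L_2(\Omega)$ so the tangent set closes to the full space, and an explicit regularity statement for $\hat\theta_n$); the paper's route is shorter but leans on the expansions (3.22)--(3.23) of Tuo and Wu and leaves the least-favorable nature of the submodel implicit.
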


\begin{proof}
If suffices to show that $\hat{\theta}_n$ given in \eqref{eq:calibration} has the same asymptotic variance as the estimator obtained by using maximum likelihood (ML) method.
Consider the following $q$-dimensional parametric model indexed by $\gamma$,
\begin{equation}\label{eq:parametrize}
\xi_{\gamma}(\cdot)=\xi(\cdot)+\gamma^T\frac{\partial p}{\partial\theta}(\cdot,\theta^*),
\end{equation}
with $\gamma\in\mathbb{R}^q$. By combining \eqref{eq:physicalmodel} and \eqref{eq:parametrize}, it becomes a traditional logistic regression model with coefficient $\gamma$. Regarding the model \eqref{eq:physicalmodel}, the true value of $\gamma$ is 0. Hence, under the regularity conditions of Theorem \ref{thm:calibration}, the ML estimator has the asymptotic expression
\begin{equation}\label{eq:MLgamma}
\hat{\gamma}_n=\frac{1}{n}W^{-1}\sum^n_{i=1}(Y_i-\eta(\mathbf{x}_i))\frac{\partial p}{\partial\theta_j}(\mathbf{x}_i,\theta^*)+o_p(n^{-1/2}),
\end{equation}
where $W$ is defined in \eqref{eq:W}. Then a natural estimator for $\theta^*$ in \eqref{L2part1} is
\begin{equation}\label{eq:MLtheta}
\hat{\theta}^{\text{ML}}_n=\arg\min_{\theta\in\Theta}\|\xi_{\hat{\gamma}_n}(\cdot)-p(\cdot,\theta^*)\|_{L_2(\Omega)}.
\end{equation}
Since the ML estimators \eqref{eq:MLgamma} and \eqref{eq:MLtheta} have the same expression as (3.22) and (3.23) in \cite{supptuo2015efficient}, it follows that 
\begin{equation}\label{eq:MLexpression}
\hat{\theta}^{\text{ML}}_n-\theta^*=2V^{-1}\left(\frac{1}{n}\sum^n_{i=1}(Y_i-\eta(\mathbf{x}_i))\frac{\partial p}{\partial\theta}(\mathbf{x}_i,\theta^*)\right)+o_p(n^{-1/2}).
\end{equation}
Therefore, since the asymptotic expression of the ML estimator in \eqref{eq:MLexpression} has the same form as the asymptotic expression of the $L_2$ calibration given by \eqref{eq:L2thm}, the $L_2$ calibration \eqref{eq:calibration} is semiparametric efficient.
\end{proof}

\end{document}